\newtheorem{theorem}{Theorem}[section]
\newtheorem{lemma}[theorem]{Lemma}
\newtheorem{proposition}[theorem]{Proposition}
\newtheorem{cor}[theorem]{Corollary}
\theoremstyle{remark}
\newtheorem{remark}[theorem]{Remark}
\theoremstyle{definition}
\newtheorem{definition}[theorem]{Definition}
\theoremstyle{example}
\newtheorem{example}[theorem]{Example}
\theoremstyle{notation}
\newcommand{\bra}[1]{\langle#1|}
\newcommand{\ket}[1]{|#1\rangle}
\begin{document}

\title{Probabilistic  inequalities and measurements in bipartite systems}            
\author{A. Vourdas}
\affiliation{Department of Computer Science,\\
University of Bradford, \\
Bradford BD7 1DP, UK\\}

\begin{abstract}

Various inequalities  (Boole inequality, Chung-Erd\"os inequality, Frechet inequality) for Kolmogorov (classical) probabilities are considered.
Quantum counterparts of these inequalities are introduced, which have an extra `quantum correction' term, and which hold for all quantum states.
When certain sufficient conditions are satisfied,  the quantum correction term is zero, and
the classical version of these inequalities holds for all states.
But in general, the classical version of these inequalities is violated by some of the quantum states. For example
in bipartite systems, classical Boole inequalities hold for all rank one (factorizable) states, and are violated by 
some rank two (entangled) states. 
A logical approach to CHSH inequalities (which are related to the Frechet inequalities), is studied in this context.
It is shown that CHSH inequalities  hold for all rank one (factorizable) states, and are violated by 
some rank two (entangled) states. 
The reduction of the rank of a pure state by a quantum measurement with both orthogonal and coherent projectors, is studied.
Bounds for the average rank reduction are given.
\end{abstract}
\maketitle

\section{Introduction}

Entanglement is an important feature of quantum mechanics. 
After the fundamental work by Einstein, Podolsky and Rosen\cite{EPR} and also Schr\"odinger\cite{S} it has been studied extensively in the literature\cite{horo}.
It leads to strong correlations between various parties, which have been studied 
within the general area of Bell inequalities and contextuality\cite{C1,C2,C3,C4,C5,C6,C7,C8,C9,C10,C11,C12}.

Kolmogorov (classical) probabilities obey many inequalities, and in this paper 
we are interested in Boole inequalities,  Chung-Erd\"os inequalities\cite{E1} and Frechet inequalities \cite{FRE,FRE1}.
Quantum probabilities are different from Kolmogorov probabilities, and we show in this paper that they 
obey quantum versions of these inequalities, that contain extra `quantum correction' terms.
This is related to the fact that Kolmogorov probabilities are intimately connected to Boolean (classical) logic formalized with set theory, while
quantum probabilities are related to the Birkhoff-von Neumann (quantum) logic \cite{LO1,LO2} formalized with subspaces of a Hilbert space.

We will use the terms quantum (classical) probabilistic inequalities, for those that contain (do not contain) quantum corrections. Then:
\begin{itemize}
\item
By definition all quantum states obey the quantum probabilistic inequalities.
\item
We give sufficient conditions for the quantum corrections to be zero, in which case the quantum probabilistic inequalities reduce to the usual classical probabilistic inequalities.
\item
In general, classical probabilistic inequalities are violated by some quantum states.
It is interesting to study such cases, because this highlights the difference between quantum and classical (Kolmogorov) probabilities.
\end{itemize}

The work can be viewed as a generalization of the area of Bell inequalities. 
An interesting question in bipartite systems,  is whether entanglement is needed for the violation of the classical inequalities. 
We prove that all rank one (factorizable) states respect some of these inequalities, while some rank two (entangled) states violate them.
Therefore in these cases entanglement is essential for the violation of classical probabilistic inequalities by pure states.

In particular we study:
\begin{itemize}
\item
{\bf Boole inequalities:}
Boole inequalities provide an upper bound to $p(A\cup B)$ in Kolmogorov (classical) probabilities.
It is shown that quantum Boole inequalities (Eqs.(\ref{ABC}), (\ref{th})) have an extra term (quantum correction) which can take positive or negative values.
Therefore only some quantum states obey the classical Boole inequalities.
In bipartite systems, it is shown that some classical Boole inequalities hold for all rank one (factorizable) states, and are violated by some rank two (entangled) states.

\item{\bf Chung-Erd\"os inequalities:}
They provide a lower bound to $p(A\cup B)$ in Kolmogorov (classical) probabilities.
Their quantum counterparts contain the quantum correction term.

\item{\bf Frechet and CHSH inequalities:}
 We study the logical derivation \cite{C8,C9,C10} of CHSH  (Clauser, Horne, Shimony and Holt \cite{C3}) type of inequalities.
We prove that all rank one (factorizable) states obey CHSH inequalities, while some rank two (entangled) states violate them.
Therefore entanglement is here essential for the violation of CHSH inequalities by pure states.
The CHSH inequalities  are intimately connected to Frechet inequalities.

\end{itemize}

We also consider quantum measurements and show that they reduce the rank of a state. This can be interpreted as partial destruction of entanglement by a measurement.
Inequalities for the average rank reduction are given,
for both orthogonal measurements, and POVM (positive operator valued measures) type of measurements with coherent projectors.

In section 2 we give the Boole, Chung-Erd\"os, and Frechet inequalities for classical (Kolmogorov) probabilities.
We also give (for later use with the probabilistic inequalities) the logical operations in the set of subspaces of a finite Hilbert space.

In section 3 we define the rank of a subspace of the Hilbert space $H_A\otimes H_B$ for a bipartite quantum system.
We also give for later use, some results on logical operations between subspaces of rank one.

In section 4 we give the quantum version of the Boole and Chung-Erd\"os inequalities, that include quantum corrections.
We also give sufficient conditions for these quantum corrections to
 be zero, in which case the quantum Boole and quantum Chung-Erd\"os inequalities reduce to their classical counterparts.

In section 5 we discuss the quantum Boole inequalities for bipartite quantum systems.
We give sufficient conditions for the quantum corrections to be zero, in which case they reduce to classical Boole inequalities.
In general, the classical Boole inequalities always hold for rank one (factorizable) states, and they are violated by some rank two (entangled) states.
We also consider a logical derivation of the CHSH inequalities  and stress that it is based on the assumption that classical Boolean inequalities hold for quantum probabilities.
Then we show that the CHSH inequalities always hold for rank one (factorizable) states, and they are violated by some rank two (entangled) states.
In these cases entanglement is important for the violation of classical probabilistic inequalities.

In sections 6,7 we study the reduction of the rank of a state by quantum measurements with orthogonal and coherent projectors, correspondingly.
In particular, we give upper bounds for the average rank reduction caused by a quantum measurement.
This quantifies the destruction of entanglement by quantum measurements.
We conclude in section 8 with a discussion of our results.

\section{Preliminaries}

\subsection{Some inequalities for Kolmogorov probabilities}

Kolmogorov probabilities \cite{KOL} are related to set theory which formalizes Boolean logic. The following proposition gives one of their fundamental properties.
\begin{proposition}\label{proKOL}
\begin{eqnarray}\label{df}
\delta(A,B)=p(A\cup B)-p(A)- p(B)+p(A\cap B)=0.
\end{eqnarray}
where $A,B$ are subsets of the set of all alternatives $\Omega$, and $p(A), p(B)$ the corresponding probabilities.
\end{proposition}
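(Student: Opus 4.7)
The plan is to invoke the finite additivity axiom of Kolmogorov probability, applied to a decomposition of $A\cup B$ into three pairwise disjoint pieces. First I would introduce the sets $A\setminus B$, $B\setminus A$, and $A\cap B$, and note the disjoint-union identities
\begin{equation*}
A = (A\setminus B)\cup(A\cap B),\qquad B=(B\setminus A)\cup(A\cap B),\qquad A\cup B=(A\setminus B)\cup(B\setminus A)\cup(A\cap B).
\end{equation*}

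Next I would apply additivity to each of these disjoint unions to obtain
\begin{equation*}
p(A)=p(A\setminus B)+p(A\cap B),\quad p(B)=p(B\setminus A)+p(A\cap B),\quad p(A\cup B)=p(A\setminus B)+p(B\setminus A)+p(A\cap B).
\end{equation*}
Substituting these three expressions into the definition of $\delta(A,B)$ in Eq.~(\ref{df}), the terms $p(A\setminus B)$ and $p(B\setminus A)$ cancel in pairs, and the four occurrences of $p(A\cap B)$ cancel as $+1-1-1+1=0$, yielding $\delta(A,B)=0$.

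There is essentially no obstacle here: the statement is the two-set inclusion–exclusion identity, which follows directly from finite additivity once the correct disjoint decomposition is written down. The only thing worth emphasising in the exposition is that this derivation relies crucially on the Boolean/set-theoretic structure underlying Kolmogorov probability, since this is precisely the feature that will fail in the Birkhoff–von Neumann lattice of subspaces and thereby generate the quantum correction terms appearing later in the paper.
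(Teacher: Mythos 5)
Your proof is correct and follows essentially the same route as the paper: both arguments reduce the identity to finite additivity applied to disjoint decompositions built from $A\setminus B$, $B\setminus A$, and $A\cap B$. The paper uses the two-piece split $A\cup B=A\cup(B\setminus A)$ together with $B=(B\setminus A)\cup(A\cap B)$, whereas you use the fully symmetric three-piece split of $A\cup B$; the difference is cosmetic.
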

\begin{proof}
One of the postulates for Kolmogorov probabilities is that if $A\cap B=\emptyset$ (exclusice events) then
\begin{eqnarray}
p(A\cup B)=p(A)+ p(B)
\end{eqnarray}
From this follows that
\begin{eqnarray}
p(A\cup B)=p(A)+ p(B\setminus A);\;\;\;p(B)=p[B\setminus (A\cap B)]+p(A\cap B).
\end{eqnarray}
We combine these two, taking into account that $p(B\setminus A)=p[B\setminus (A\cap B)]$ and we prove the proposition.
\end{proof}
Kolmogorov probabilities obey the following equalities:
\begin{itemize}
\item
{\bf Boole's inequality:} 
\begin{eqnarray}\label{P1}
p(A\cup B)\le p(A)+ p(B).
\end{eqnarray}
This follows immediately from Eq.(\ref{df}), and generalizes easily to
\begin{eqnarray}
p(A_1\cup ...\cup A_n)\le p(A_1)+ ...+p(A_n);\;\;\;A_i\subseteq \Omega.
\end{eqnarray}
This provides an upper bound to $p(A_1\cup ...\cup A_n)$.
Many probabilistic inequalities are based on this.

\item
{\bf Chung-Erd\"os inequality:} This provides a lower bound to $p(A_1\cup ...\cup A_n)$:
\begin{eqnarray}
p(A_1\cup...\cup A_n)\ge \frac{[\sum p(A_i)]^2}{\sum _{i,j}p(A_i\cap A_j)};\;\;\;A_i\subseteq \Omega.
\end{eqnarray}
It is given in ref.\cite{E1} (Eq.(4)) as:
\begin{eqnarray}
\sum _{i,j}p(A_i\cap A_j)=\sum_ip(A_i)+2\sum _{i< j}p(A_i\cap A_j)\ge \frac{[\sum p(A_i)]^2}{p(A_1\cup...\cup A_n)}
\end{eqnarray}

We are interested in the case $n=2$, and then it becomes
\begin{eqnarray}\label{fg}
p(A\cup B)\ge \frac{[p(A)+p(B)]^2}{p(A)+p(B)+2p(A\cap B)}.
\end{eqnarray}
This inequality is not widely known, and for this reason we briefly prove it.
Using Eq.(\ref{df}) we substitute 
\begin{eqnarray}\label{fg1}
p(A\cap B)=p(A)+ p(B)-p(A\cup B),
\end{eqnarray}
and then this inequality can be written as
\begin{eqnarray}\label{fg2}
[p(A)+p(B)-p(A\cup B)][2p(A\cup B)-p(A)-p(B)]\ge 0.
\end{eqnarray}
But $2p(A\cup B)-p(A)-p(B)\ge 0$ because if $E\subseteq F$ then $p(E)\le p(F)$.
Also $p(A)+p(B)-p(A\cup B)\ge 0$ because of Boole's inequality.
This completes the proof of Eq.(\ref{fg}).

\item
{\bf Frechet inequality \cite{FRE,FRE1}:}
\begin{eqnarray}
\sum _{i=1}^np(A_i)\le (n-1)+p(A_1\cap ...\cap A_n)\;\;\;A_i\subseteq \Omega.
\end{eqnarray}
In the special case that $A_1\cap ...\cap A_n=\emptyset$ this reduces to
\begin{eqnarray}\label{frec}
\sum _{i=1}^np(A_i)\le (n-1).
\end{eqnarray}
The CHSH inequalities in Eq.(\ref{77}) below are  analogues of this in a quantum context.

\end{itemize}
\begin{remark}
\mbox{}
\begin{itemize}
\item[(1)]
The derivation of probabilistic inequalities assumes the existence of a single probability space.
Several authors have expressed the view that Bell's inequalities are violated because of the lack of a single probability space (e.g. \cite{KR}).
\item[(2)]
A general measure theory structure for quantum correlations has been studied in \cite{L}.
\end{itemize}
\end{remark}
\subsection{Logical operations in finite-dimensional Hilbert spaces}

We consider a quantum system described by a finite-dimensional Hilbert space $H$.
If $h_1, h_2$ are subspaces of $H$,
we define the conjunction (logical AND) and disjunction (logical OR) \cite{LO1,LO2}, as:
\begin{eqnarray}
h_1\wedge h_2=h_1\cap h_2;\;\;\;\;\;h_1\vee h_2={\rm span}(h_1 \cup h_2).
\end{eqnarray}
Unlike the Boolean (classical) OR which is formalized with the union of sets, the quantum OR is the union of two subspaces plus all superpositions.
Consequently Kolmogorov (classical) probabilities have different properties from quantum probabilities, and this underpins many of the `surprising' quantum phenomena, like the violation of Bell-like inequalities, and the better performance by quantum computation than classical computation for certain tasks.

The Boolean AND is similar to the quantum AND (they are both intersections).
Later (in section \ref{logic}) we discuss these logical operations in a bipartite system.

We use the notation $h_1\prec h_2$ to indicate that $h_1$ is a subspace of $h_2$ (or equal to $h_2$).
The orthocomplement (logical NOT) of a subspace $h_1$ is unique, and is another subspace which we denote as 
$h_1^{\perp}$, with the properties
\begin{eqnarray}\label{3}
&&h_1\wedge h_1^{\perp}={\cal O};\;\;\;\;h_1\vee h_1^{\perp}={\cal I}=H;\;\;\;\;(h_1^{\perp})^{\perp}=h_1\nonumber\\
&&(h_1\wedge h_2)^{\perp}=h_1^{\perp}\vee h_2^{\perp};\;\;\;\;(h_1\vee h_2)^{\perp}=h_1^{\perp}\wedge h_2^{\perp}\nonumber\\
&&\dim(h_1)+\dim(h_1^{\perp})=d.
\end{eqnarray}
Here ${\cal O}$ is the space that contains only the zero vector (denoted as $0$).

If $h$ is a subspace of $H$, $\Pi(h)$ denotes the projector to the subspace $h$. Also if $U$ is a unitary transformation, $Uh$ denotes the subspace that contains all the states $U\ket{s}$ where $\ket {s}$ belongs to $h$.

\section{Rank of subspaces of $H_A\otimes H_B$}

We consider a bipartite system comprised of two systems $A, B$ described with the finite-dimensional Hilbert spaces $H_A, H_B$, correspondingly.
We assume that $\dim (H_A)=d_A$ and that  $\dim (H_B)=d_B$.
The bipartite system is described with the tensor product ${\cal H}=H_A\otimes H_B$,
and everything in this paper is studied with respect to this factorization. 
There is a natural physical meaning to it in the sense that $H_A$, $H_B$ can describe two different subsystems in different locations, associated with two different observers.

Local unitary transformations $U_A\otimes U_B$, are unitary transformations $U_A$ on $H_A$ and  $U_B$ on $H_B$, and preserve the factorization of ${\cal H}$ as $H_A\otimes H_B$. Non-local transformations change this factorization, and are not discussed in this paper.

 \subsection{Rank of pure states in bipartite systems}
A pure state is expressed in the `entangling representation' as
\begin{eqnarray}\label{1}
\ket{s}=\sum _{i=1}^n\lambda _i\ket{a_{i}}\otimes \ket{b_{i}};\;\;\;\ket{a_{i}}\in H_A;\;\;\; \ket{b_{i}}\in H_B
\end{eqnarray}
If we replace $\ket{a_{i}}$ with a sum of other vectors in $H_A$ and also $\ket{b_{i}}$ with a sum of other vectors in $H_B$, we get other entangling representations of the same state $\ket{s}$.

The rank of a state $\ket{s}$, is the least $n$ in all the entangling representations of $\ket {s}$.
Minimal entangling representations of $\ket{s}$, are the entangling representations with  $n= {\rm rank}(\ket{s})$.
In the minimal entangling representation
\begin{eqnarray}\label{12}
\ket{s}=\sum _{i=1}^n\lambda _i\ket{a_{i}}\otimes \ket{b_{i}};\;\;\;n={\rm rank}(\ket {s}).
\end{eqnarray}
where the $n$ states $\ket{a_{i}}$ are linearly independent within the space $H_A$, and the $n$ states $\ket{b_{i}}$ are linearly independent within the space $H_B$.

Let ${B}_A=\{\ket {e_1},...,\ket {e_{d_A}}\}$ be an orthonormal basis in $H_A$, and ${B}_B=\{\ket {f_1},...,\ket {f_{d_B}}\}$ an orthonormal basis in $H_B$.
The $\ket{e_i}\otimes \ket{f_j}$  is a basis in $H_A\otimes H_B$.
With respect to this basis, we represent $\ket{s}$ as
\begin{eqnarray}\label{345}
\ket{s}=\sum \mu_{ij}\ket{e_i}\otimes \ket{f_j};\;\;\;i=1,...,d_A;\;\;\;j=1,...,d_B.
\end{eqnarray}
The state $\ket{s}$ can also be written as
\begin{eqnarray}\label{MM1}
\ket{s}=\sum _i\ket{e_i}\otimes \ket{b_i};\;\;\;\ket{b_i}=\sum _j\mu_{ij}\ket{f_j}
\end{eqnarray}
where $\ket{b_i}$ are in general non-orthogonal, non-normalized states, or as
\begin{eqnarray}\label{MM2}
\ket{s}=\sum _j\ket{a_j}\otimes \ket{f_j};\;\;\;\ket{a_j}=\sum _i\mu_{ij}\ket{e_i}.
\end{eqnarray}
where $\ket{a_j}$ are in general non-orthogonal, non-normalized states.

We represent uniquely the state $\ket{s}$
with the $d_A\times d_B$ complex matrix 
\begin{eqnarray}
{\cal M}(\ket{s})=\left (\mu_{ij}\right );\;\;\;i=1,...,d_A;\;\;\;j=1,...,d_B.
\end{eqnarray}
The index $i$ (rows) is associated with the Hilbert space $H_A$, and the index $j$ (columns) with the Hilbert space $H_B$.
The corresponding bra state $\bra{s}$ is represented with the matrix ${\cal M}(\bra{s})=[{\cal M}(\ket{s})]^{\dagger}$
Conversely, any non-zero complex $d_A\times d_B$ matrix defines uniquely a state $\ket {s}$ in $H_A\otimes H_B$, with respect to the basis $\ket{e_i}\otimes \ket{f_j}$.
The rank of the matrix ${\cal M}(\ket{s})$ is equal to the rank of the state and it is also equal to its Schmidt number.

\subsection{Rank of subspaces}

Let $h\prec H_A\otimes H_B$ (where $\prec$ indicates subspace).
We define the rank of $h$ as follows \cite{1}.
\begin{definition}\label{rank1}
The rank of a subspace $h$ of $H_A\otimes H_B$ is the infimum of the ranks of all its (non-zero) vectors.
The rank of the zero subspace ${\cal O}$ (which contains only the zero vector) is defined to be $0$.
\end{definition}
If $h_A\prec H_A$ and $h_B\prec H_B$, then the $h_A\otimes h_B $ is a subspace of $H_A\otimes H_B$ with rank $1$.
If $\ket{e_i}$, $\ket{f_j}$ are orthonormal bases in $h_A$, $h_B$ correspondingly, then
\begin{eqnarray}\label{prod1}
\Pi(h_A\otimes h_B)=\Pi_A(h_A)\otimes \Pi_B(h_B)=\sum _{i,j}\ket{e_i}\bra{e_i}\otimes\ket{f_j}\bra{f_j}
\end{eqnarray}
There are many subspaces of  $H_A\otimes H_B$ which cannot be written as $h_A\otimes h_B$.
An example is any one-dimensional  subspace $h$ that contains a state $\ket{s}$ with ${\rm rank}(\ket{s})\ge 2$, in which case ${\rm rank}(h)\ge 2$.

It is easily seen that
\begin{eqnarray}\label{40}
h_1\prec h_2\prec H_A\otimes H_B\;\;\rightarrow\;\;{\rm rank }(h_1)\ge {\rm rank}(h_2).
\end{eqnarray}

\begin{example}
In the $3$-dimensional space $H(3)$ we consider the orhonormal basis $\ket{0}, \ket{1}, \ket{2}$.
We also consider the $2$-dimensional subspace $h$ of the $9$-dimensional space $H(3)\otimes H(3)$ that contains the vectors
\begin{eqnarray}
h=\{a(\ket{0,0}+\ket{1,1})+b(\ket{0,1}+\ket{1,2})\;|\;a,b\in {\mathbb C}\}
\end{eqnarray}
The general vector in $h$ is represented by the matrix
\begin{eqnarray}
{\cal M}=
\begin{pmatrix}
a&b&0\\
0&a&b\\
0&0&0\\
\end{pmatrix};\;\;\;
{\rm rank} ({\cal M})=2.
\end{eqnarray}
Therefore ${\rm rank} (h)=2$.

\end{example}
\subsection{Logical operations between subspaces of rank one}\label{logic}

The following propositions discuss the logical operations between subspaces of rank one in bipartite systems, and are used later in section \ref{BB}.
They provide a deeper insight to the nature of  logical OR in a quantum context of bipartite systems.

\begin{proposition}\label{pro56}
Let $h_{1A}, h_{2A}$ be subspaces of $H_A$, and $h_{1B}, h_{2B}$ be subspaces of $H_B$. Also let
\begin{eqnarray}
h_{1}=h_{1A}\otimes h_{1B};\;\;\;g_{12}=h_{1A}\otimes h_{2B};\;\;\;g_{21}=h_{2A}\otimes h_{1B};\;\;\;h_{2}=h_{2A}\otimes h_{2B},
\end{eqnarray}
be subspaces of $H_A\otimes H_B$ of rank one.
Then:
\begin{itemize}
\item[(1)]
\begin{eqnarray}\label{e1}
&&h_{1A}\otimes (h_{1B}\wedge h_{2B})= (h_{1A}\otimes h_{1B})\wedge (h_{1A}\otimes h_{2B})=h_{1}\wedge g_{12}\nonumber\\
&&(h_{1A}\wedge h_{2A})\otimes h_{1B}= (h_{1A}\otimes h_{1B})\wedge (h_{2A}\otimes h_{1B})=h_{1}\wedge g_{21}\nonumber\\
&&(h_{1A}\wedge h_{2A})\otimes (h_{1B}\wedge h_{2B})=h_{1}\wedge g_{12}\wedge g_{21}\wedge h_{2}.
\end{eqnarray}
The first two equations describe conjunction (logical AND) in one of the parties.
The third equation describes conjunction in both of the parties.

\item[(2)]
\begin{eqnarray}\label{e2}
&&h_{1A}\otimes (h_{1B}\vee h_{2B})= (h_{1A}\otimes h_{1B})\vee (h_{1A}\otimes h_{2B})=h_{1}\vee g_{12}\nonumber\\
&&(h_{1A}\vee h_{2A})\otimes h_{1B}= (h_{1A}\otimes h_{1B})\vee (h_{2A}\otimes h_{1B})=h_{1}\vee g_{21}\nonumber\\
&&(h_{1A}\vee h_{2A})\otimes (h_{1B}\vee h_{2B})=h_{1}\vee g_{12}\vee g_{21}\vee h_{2}.
\end{eqnarray}
The first two equations describe disjunction (logical OR) in one of the parties.
The third equation describes disjunction in both of the parties.

\end{itemize}
\end{proposition}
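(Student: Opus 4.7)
The plan is to prove the three identities in each of parts (1) and (2) in order: first the two ``single-party'' identities (which do the real work), and then deduce the ``both-party'' identity by applying the single-party ones twice and invoking associativity of $\wedge$ and $\vee$. Throughout I will identify subspaces with their projectors and use the fact that $\Pi(h_A \otimes h_B) = \Pi_A(h_A)\otimes \Pi_B(h_B)$ from (\ref{prod1}), together with the finite-dimensionality of $H_A, H_B$ so that $\vee$ coincides with ordinary sum of subspaces.

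For part (1), consider the first identity $h_{1A}\otimes(h_{1B}\wedge h_{2B}) = h_1\wedge g_{12}$. The inclusion $\subseteq$ is immediate: any elementary tensor $v\otimes w$ with $v\in h_{1A}$ and $w \in h_{1B}\cap h_{2B}$ lies in both $h_{1A}\otimes h_{1B}$ and $h_{1A}\otimes h_{2B}$, and the inclusion extends by linearity. For $\supseteq$, fix an orthonormal basis $\{\ket{e_i}\}$ of $h_{1A}$ and take any $x\in h_1\wedge g_{12}$. Writing $x\in h_{1A}\otimes h_{1B}$ as $x=\sum_i \ket{e_i}\otimes \ket{b_i^{(1)}}$ with $\ket{b_i^{(1)}}\in h_{1B}$ and writing the same $x\in h_{1A}\otimes h_{2B}$ as $\sum_i \ket{e_i}\otimes \ket{b_i^{(2)}}$ with $\ket{b_i^{(2)}}\in h_{2B}$, linear independence of the $\ket{e_i}$ forces $\ket{b_i^{(1)}}=\ket{b_i^{(2)}}\in h_{1B}\wedge h_{2B}$, so $x\in h_{1A}\otimes(h_{1B}\wedge h_{2B})$. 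The second identity follows by symmetry, exchanging the roles of $A$ and $B$. I expect this ``uniqueness of the tensor expansion along a basis'' argument to be the main (minor) obstacle, since everything else reduces to routine manipulations.

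For part (2), the first identity $h_{1A}\otimes(h_{1B}\vee h_{2B}) = h_1\vee g_{12}$ is easier because $\vee$ is just the vector-space sum. The inclusion $\supseteq$ is clear since $h_1, g_{12}\subseteq h_{1A}\otimes(h_{1B}+h_{2B})$. For $\subseteq$, any spanning element $\ket{v}\otimes(\ket{w_1}+\ket{w_2})$ with $\ket{v}\in h_{1A}$, $\ket{w_k}\in h_{kB}$ splits by bilinearity as $\ket{v}\otimes\ket{w_1}+\ket{v}\otimes\ket{w_2}\in h_1+g_{12}$. The second identity in (2) follows by symmetry.

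For the ``both-party'' third identity in each part, I use associativity of $\wedge$ (respectively $\vee$) and regroup. In part (1), applying the first identity of (1) gives $h_1\wedge g_{12}= h_{1A}\otimes(h_{1B}\wedge h_{2B})$ and $g_{21}\wedge h_2=h_{2A}\otimes(h_{1B}\wedge h_{2B})$; intersecting these and applying the second identity of (1) with $h_{1B}\wedge h_{2B}$ playing the role of the common $B$-factor yields $(h_{1A}\wedge h_{2A})\otimes(h_{1B}\wedge h_{2B})$. The identical pattern, with $\vee$ in place of $\wedge$, gives the third identity in part (2). This completes the proof.
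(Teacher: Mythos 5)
Your proof is correct and follows essentially the same route as the paper's: the same expansion of a vector of $h_{1A}\otimes H_B$ along an orthonormal basis of $h_{1A}$ with uniquely determined $B$-components to handle the conjunction, the same splitting of spanning vectors for the disjunction (where $\vee$ is the subspace sum), and the same two-step combination for the both-party identities, which you merely spell out in more detail than the paper does.
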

\begin{proof}
\begin{itemize}
\item[(1)]
The  general state in $h_{1A}\otimes (h_{1B}\wedge h_{2B})$ is
\begin{eqnarray}
\ket{s}=\sum \ket{a_{i}}\otimes \ket{b_{i}};\;\;\;\ket{a_{i}}\in h_{1A};\;\;\; \ket{b_{i}}\in h_{1B}\wedge h_{2B}.
\end{eqnarray}
Therefore the state $\ket{s}$ belongs to both $h_{1A}\otimes h_{1B}$, $h_{1A}\otimes h_{2B}$, and then it belongs to $(h_{1A}\otimes h_{1B})\wedge (h_{1A}\otimes h_{2B})$.

Conversely if $\ket{s}$ belongs to $(h_{1A}\otimes h_{1B})\wedge (h_{1A}\otimes h_{2B})$ then it belongs to both $h_{1A}\otimes h_{1B}$, and $h_{1A}\otimes h_{2B}$.
Using Eq.(\ref{MM1}) we express it as
\begin{eqnarray}
&&\ket{s}=\sum \ket{e_{i}}\otimes \ket{b_{i}};\;\;\;\ket{e_{i}}\in h_{1A};\;\;\; \ket{b_{i}}\in h_{1B}\nonumber\\
&&\ket{s}=\sum \ket{e_{i}}\otimes \ket{c_{i}};\;\;\;\ket{e_{i}}\in h_{1A};\;\;\; \ket{c_{i}}\in h_{2B}.
\end{eqnarray}
Here $\ket{e_i}$ is an orthonormal set of states in $h_{1A}$ and $\ket{b_i}, \ket{c_i}$ are non-orthogonal and non-normalized states in $h_{1B}$, $h_{2B}$, correspondingly.
Since the states $\ket{e_i}$ are an orthonormal set, it follows that  $\ket{b_{i}}= \ket{c_{i}}$ and therefore the state $\ket {s}$ belongs to
both $h_{1A}\otimes h_{1B}$ and $h_{1A}\otimes h_{2B}$ and to $h_{1A}\otimes (h_{1B}\wedge h_{2B})$.
This completes the proof of the first of Eqs.(\ref{e1}).
The second equation is proved in analogous way (using the representation in Eq.(\ref{MM2}). The third equation is proved by combining the first two equations.

\item[(2)]
The general state in $h_{1A}\otimes (h_{1B}\vee h_{2B})$ is
\begin{eqnarray}
\ket{s}=\sum \ket{a_{i}}\otimes (\lambda _i\ket{b_{i}}+\mu_i\ket{c_i}) ;\;\;\;\ket{a_{i}}\in h_{1A};\;\;\; \ket{b_{i}}\in h_{1B};\;\;\;\ket{c_{i}}\in h_{2B}.
\end{eqnarray}
Therefore
\begin{eqnarray}
&&\ket{s}=\ket{s_1}+\ket{s_2}\nonumber\\
&&\ket{s_1}=\sum \ket{a_{i}}\otimes (\lambda _i\ket{b_{i}}) ;\;\;\;\ket{a_{i}}\in h_{1A};\;\;\; \ket{b_{i}}\in h_{1B}\nonumber\\
&&\ket{s_2}=\sum \ket{a_{i}}\otimes (\mu_i\ket{c_i}) ;\;\;\;\ket{a_{i}}\in h_{1A};\;\;\;\ket{c_{i}}\in h_{2B}
\end{eqnarray}
The state $\ket{s_1}$ belongs to $h_{1A}\otimes h_{1B}$, and the state $\ket{s_2}$ belongs to $h_{1A}\otimes h_{2B}$.
Therefore $\ket{s}$ belongs to $(h_{1A}\otimes h_{1B})\vee (h_{1A}\otimes h_{2B})$.

Conversely if $\ket{s}$ belongs to $(h_{1A}\otimes h_{1B})\vee (h_{1A}\otimes h_{2B})$ then 
using Eq.(\ref{MM1}) we express it as
\begin{eqnarray}
&&\ket{s}=\ket{s_1}+\ket{s_2}\nonumber\\
&&\ket{s_1}=\sum \ket{e_{i}}\otimes \ket{b_{i}};\;\;\;\ket{e_{i}}\in h_{1A};\;\;\; \ket{b_{i}}\in h_{1B}\nonumber\\
&&\ket{s_2}=\sum \ket{e_{i}}\otimes \ket{c_{i}};\;\;\;\ket{e_{i}}\in h_{1A};\;\;\; \ket{c_{i}}\in h_{2B}.
\end{eqnarray}
where $\ket{e_i}$ is an orthonormal set of states in $h_{1A}$ and $\ket{b_i}, \ket{c_i}$ are non-orthogonal and non-normalized states in $h_{1B}$, $h_{2B}$, correspondingly.
From this follows that
\begin{eqnarray}
\ket{s}=\sum \ket{e_{i}}\otimes (\ket{b_{i}} +\ket{c_{i}});\;\;\;\ket{e_{i}}\in h_{1A};\;\;\;  (\ket{b_{i}} +\ket{c_{i}})\in h_{1B}\vee h_{2B}
\end{eqnarray}
Therefore the state $\ket {s}$ belongs to $h_{1A}\otimes (h_{1B}\vee h_{2B})$.
This completes the proof of the first of Eqs.(\ref{e2}).
The second equation is proved in analogous way, using the representation in Eq.(\ref{MM2}). The third equation is proved by combining the first two equations.
\end{itemize}
\end{proof}
\begin{cor}
Let $h_{1A}, h_{2A}$ be subspaces of $H_A$, and $h_{1B}, h_{2B}$ be subspaces of $H_B$.
Also let
\begin{eqnarray}
h_{1}=h_{1A}\otimes h_{1B};\;\;\;h_{2}=h_{2A}\otimes h_{2B},
\end{eqnarray}
be subspaces of $H_A\otimes H_B$ of rank one. Then
\begin{itemize}
\item[(1)]
\begin{eqnarray}\label{ttt}
&&(h_{1A}\wedge h_{2A})\otimes (h_{1B}\wedge h_{2B})\prec h_{1}\wedge h_{2}\nonumber\\
&&(h_{1A}\vee h_{2A})\otimes (h_{1B}\vee h_{2B})\succ h_{1}\vee h_{2}.
\end{eqnarray}
\item[(2)]
The orthocomplement of $h_{1A}\otimes h_{1B}$ is in general different from $h_{1A}^{\perp}\otimes h_{1B}^{\perp}$
\begin{eqnarray}
(h_{1A}\otimes h_{1B})^{\perp}\ne h_{1A}^{\perp}\otimes h_{1B}^{\perp}.
\end{eqnarray}
\end{itemize}
\end{cor}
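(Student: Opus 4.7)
The plan is to derive part (1) as an immediate consequence of the third equations in Proposition \ref{pro56}, and to settle part (2) by a dimension-counting counterexample.

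For part (1), I would start by invoking the third line of Eq.(\ref{e1}), which gives $(h_{1A}\wedge h_{2A})\otimes (h_{1B}\wedge h_{2B})=h_{1}\wedge g_{12}\wedge g_{21}\wedge h_{2}$. Since the meet operation is monotone in the sense that adding more conjuncts can only shrink the subspace (formally $h_1\wedge g_{12}\wedge g_{21}\wedge h_2\prec h_1\wedge h_2$ because the left-hand side is contained in both $h_1$ and $h_2$), the first inclusion in Eq.(\ref{ttt}) follows at once. By the same argument applied to the third line of Eq.(\ref{e2}), we have $(h_{1A}\vee h_{2A})\otimes (h_{1B}\vee h_{2B})=h_{1}\vee g_{12}\vee g_{21}\vee h_{2}$, and since a join can only grow when more disjuncts are added, the right-hand side contains $h_1\vee h_2$, giving the second inclusion.

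For part (2), the natural approach is a simple dimension count. Pick one-dimensional subspaces $h_{1A}\prec H_A$ and $h_{1B}\prec H_B$, so that $h_{1A}\otimes h_{1B}$ is one-dimensional and therefore $\dim[(h_{1A}\otimes h_{1B})^{\perp}]=d_Ad_B-1$. On the other hand, $\dim(h_{1A}^{\perp}\otimes h_{1B}^{\perp})=(d_A-1)(d_B-1)=d_Ad_B-d_A-d_B+1$, which is strictly smaller whenever $d_A,d_B\ge 2$. Hence the two subspaces cannot coincide; in fact $h_{1A}^{\perp}\otimes h_{1B}^{\perp}$ is a proper subspace of $(h_{1A}\otimes h_{1B})^{\perp}$, because any product vector with both factors in the respective orthocomplements is clearly orthogonal to any vector in $h_{1A}\otimes h_{1B}$, but states such as $\ket{a}\otimes \ket{b}$ with $\ket{a}\in h_{1A}$ and $\ket{b}\in h_{1B}^{\perp}$ also lie in $(h_{1A}\otimes h_{1B})^{\perp}$ while not belonging to $h_{1A}^{\perp}\otimes h_{1B}^{\perp}$.

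I do not foresee any real obstacle: part (1) is essentially a rewriting exercise once Proposition \ref{pro56} is in hand, and part (2) needs only the standard dimension formula $\dim h+\dim h^{\perp}=d$ (Eq.(\ref{3})) together with $\dim(h_A\otimes h_B)=\dim(h_A)\dim(h_B)$. The most one needs to be careful about is stating that Eq.(\ref{ttt}) is in general a strict inclusion (it becomes equality only in degenerate cases such as when the relevant factor subspaces coincide), but this mild point can be illustrated with the same low-dimensional example used for part (2).
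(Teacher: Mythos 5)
Your part (1) follows exactly the paper's route: both inclusions in Eq.(\ref{ttt}) are read off from the third lines of Eqs.(\ref{e1}) and (\ref{e2}), using that a meet can only shrink and a join can only grow when extra terms are added. For part (2) you take a genuinely different and, in my view, tighter path. The paper specializes Eq.(\ref{ttt}) to $h_{2A}=h_{1A}^{\perp}$, $h_{2B}=h_{1B}^{\perp}$ and argues that the defining properties $h\wedge h^{\perp}={\cal O}$ and $h\vee h^{\perp}=H_A\otimes H_B$ fail ``in general'' for $h_{1A}^{\perp}\otimes h_{1B}^{\perp}$; but the two displayed inclusions there (${\cal O}\prec\cdots$ and $H_A\otimes H_B\succ\cdots$) are satisfied trivially by any subspace, so that argument only shows equality is not forced rather than exhibiting a failure. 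Your dimension count, $\dim[(h_{1A}\otimes h_{1B})^{\perp}]=d_Ad_B-\dim(h_{1A})\dim(h_{1B})$ against $\dim(h_{1A}^{\perp}\otimes h_{1B}^{\perp})=[d_A-\dim(h_{1A})][d_B-\dim(h_{1B})]$, produces an explicit mismatch whenever the factor subspaces are proper and nontrivial, and your additional remark that $h_{1A}^{\perp}\otimes h_{1B}^{\perp}$ is a \emph{proper} subspace of $(h_{1A}\otimes h_{1B})^{\perp}$ (witnessed by vectors such as $\ket{a}\otimes\ket{b}$ with $\ket{a}\in h_{1A}$, $\ket{b}\in h_{1B}^{\perp}$) is correct and more informative than what the paper records. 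Both approaches suffice for a claim of ``in general different,'' but yours is the more elementary and fully self-contained demonstration.
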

\begin{proof}
\begin{itemize}
\item[(1)]
This follows from the last of Eqs(\ref{e1}), and the last  of Eqs(\ref{e2}).
\item[(2)]
We use Eq.(\ref{ttt}) with $h_{2A}=h_{1A}^{\perp}$ and $h_{2B}=h_{1B}^{\perp}$ and we get
\begin{eqnarray}
&&{\cal O}=(h_{1A}\wedge h_{1A}^{\perp})\otimes (h_{1B}\wedge h_{1B}^{\perp})\prec (h_{1A}\otimes h_{1B})\wedge (h_{1A}^{\perp}\otimes h_{1B}^{\perp})\nonumber\\
&&H_A\otimes H_B=(h_{1A}\vee h_{1A}^{\perp})\otimes (h_{1B}\vee h_{1B}^{\perp})\succ (h_{1A}\otimes h_{1B})\vee (h_{1A}^{\perp}\otimes h_{1B}^{\perp})
\end{eqnarray}
Therefore in general $(h_{1A}\otimes h_{1B})\wedge (h_{1A}^{\perp}\otimes h_{1B}^{\perp})\ne {\cal O}$ and 
$(h_{1A}\otimes h_{1B})\vee (h_{1A}^{\perp}\otimes h_{1B}^{\perp})\ne H_A\otimes H_B$, and this completes the proof.
\end{itemize}
\end{proof}

\section{Boole and Chung-Erd\"os inequalities}

\subsection{Kolmogorov versus quantum probabilities}

Quantum probabilities are associated with projectors to subspaces of a Hilbert space.
In the case of finite-dimensional Hilbert space, the set of its subspaces is a modular orthocomplemented lattice (Birkhoff-von Neumann lattice\cite{LO1,LO2}).
Subspaces with corresponding projectors which commute, generate a sublattice which is a Boolean algebra. But the full lattice is not distributive and is not a Boolean algebra 
(e.g., chapter 6 in \cite{Fin2}).

Quantum probabilities are different from Kolmogorov probabilities and are related to quantum logic formalized with the modular orthocomplemented lattice.
The analogue of Eq.(\ref{df}) is given in the following definition.
\begin{definition}
\begin{eqnarray}\label{df1}
{\mathfrak D}(h_1, h_2)=\Pi(h_1\vee h_2)-\Pi(h_1)-\Pi(h_2)+\Pi(h_1\wedge h_2).
\end{eqnarray}
\end{definition}
We have studied the quantity ${\mathfrak D}(h_1, h_2)$ in \cite{C10, AV}, and we give two of its properties (without proof) which are needed in the present context:
\begin{itemize}

\item
The lattice ${\cal L}(H)$ of subspaces of a finite Hilbert space $H$, is modular and a property of modular lattices \cite{LO2} is that
\begin{eqnarray}
\dim(h_1\vee h_2)-\dim(h_1)-\dim(h_2)+\dim(h_1\wedge h_2)=0.
\end{eqnarray}
From this follows that:
\begin{eqnarray}
{\rm Tr}[{\mathfrak D}(h_1, h_2)]={\rm Tr}[\Pi(h_1\vee h_2)]-{\rm Tr}[\Pi(h_1)]-{\rm Tr}[\Pi(h_2)]+{\rm Tr}[\Pi(h_1\wedge h_2)]=0.
\end{eqnarray}
The ${\mathfrak D}(h_1, h_2)$ (which is the analogue of $\delta(A,B)=0$ in Eq.(\ref{df})) is not zero, but its trace is zero.
The proof of proposition \ref{proKOL} does not hold here, because the disjunction of two subspaces is not just their union but it contains all superpositions. 
Since its trace is zero, the ${\mathfrak D}(h_1, h_2)$ has both positive and negative eigenvalues and 
therefore the $\bra{s}{\mathfrak D}(h_1, h_2)\ket{s}$ takes both positive and negative values.

\item
${\mathfrak D}(h_1, h_2)$ is related to the commutator of the projectors $\Pi(h_1), \Pi(h_2)$ as follows:
\begin{eqnarray}\label{df2}
[\Pi(h_1), \Pi(h_2)]={\mathfrak D}(h_1, h_2)[\Pi(h_1)-\Pi(h_2)].
\end{eqnarray}
Therefore ${\mathfrak D}(h_1, h_2)=0$, when the  projectors $\Pi(h_1), \Pi(h_2)$ commute.
In this sense the $\bra{s}{\mathfrak D}(h_1, h_2)\ket{s}$ is a 'quantum correction' or a 'non-commutativity correction' term.
Below it appears in quantum versions of the classical probabilistic inequalities.

\end{itemize}

\subsection{Quantum Boole and quantum Chung-Erd\"os inequalities: upper and lower bounds for $p[\Pi(h_1\vee h_2)]$}
In this section we give the Boole and Chung-Erd\"os inequalities in a quantum context.
They are the analogue of Eqs.(\ref{P1}), (\ref{fg}) but they also contain the quantum correction term $\bra{s}{\mathfrak D}(h_1, h_2)\ket{s}$.
They provide upper and lower bounds to $p[\Pi(h_1\vee h_2)]$ (for another approach see ref\cite{OMW}).

\begin{proposition}\label{pro24}
Let $p[\Pi(h)]=\bra{s}\Pi(h)\ket{s}$.
\begin{itemize}
\item[(1)]
${\mathfrak B}_L, {\mathfrak B}_U$ are lower and upper bounds for $p[\Pi(h_1\vee h_2)]$, given by:
\begin{eqnarray}\label{ABC}
&&{\mathfrak B}_L=\frac{[p[\Pi(h_1)]+p[\Pi(h_2)]+\bra{s}{\mathfrak D}(h_1, h_2)\ket{s}]^2}{p[\Pi(h_1)]+p[\Pi(h_2)]+\bra{s}{\mathfrak D}(h_1, h_2)\ket{s}+2p[\Pi(h_1\wedge h_2)]}\nonumber\\
&&{\mathfrak B}_U=p[\Pi(h_1)]+p[\Pi(h_2)]+\bra{s}{\mathfrak D}(h_1, h_2)\ket{s}\nonumber\\
&&{\mathfrak B}_L \le p[\Pi(h_1\vee h_2)]\le {\mathfrak B}_U.
\end{eqnarray}
We refer to the right (left) hand side as quantum Boole (quantum Chung-Erd\"os) inequalities.
The $\bra{s}{\mathfrak D}(h_1, h_2)\ket{s}$ takes both positive and negative values, and provides a quantum correction to the Boole and Chung-Erd\"os inequalities for 
Kolmogorov (classical) probabilities.
\item[(2)]
If one of the following sufficient conditions holds
\begin{itemize}
\item
the projectors $\Pi(h_1), \Pi(h_2)$ commute,
\item
the state $\ket{s}$ belongs to the space $h_1\wedge h_2$, 
\item 
the state $\ket{s}$ belongs to the space $(h_1\vee h_2)^{\perp}=h_1^{\perp}\wedge h_2^{\perp}$, 
\end{itemize}
then the classical Boole and the classical Chung-Erd\"os inequalities hold:
\begin{eqnarray}\label{boole1}
\frac{[p[\Pi(h_1)]+p[\Pi(h_2)]]^2}{p[\Pi(h_1)]+p[\Pi(h_2)]+2p[\Pi(h_1\wedge h_2)]}\le p[\Pi(h_1\vee h_2)]\le p[\Pi(h_1)]+p[\Pi(h_2)].
\end{eqnarray}
\end{itemize}
\end{proposition}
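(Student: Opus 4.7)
The plan is to derive both bounds from a single identity obtained by taking the expectation value of the operator ${\mathfrak D}(h_1,h_2)$ in the state $\ket{s}$. From the definition in Eq.(\ref{df1}) this gives
\begin{equation}
\bra{s}{\mathfrak D}(h_1,h_2)\ket{s} = p[\Pi(h_1\vee h_2)] - p[\Pi(h_1)] - p[\Pi(h_2)] + p[\Pi(h_1\wedge h_2)],
\end{equation}
which is the quantum counterpart of the classical $\delta(A,B)=0$ from Eq.(\ref{df}). Everything in part (1) will follow from this identity combined with positivity and monotonicity of $p[\Pi(\cdot)]$.

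For the upper bound ${\mathfrak B}_U$ I rearrange the identity to solve for $p[\Pi(h_1\vee h_2)]$ and discard the non-negative term $p[\Pi(h_1\wedge h_2)]$; this is a one-line mimicry of the classical derivation of Boole's inequality in Eq.(\ref{P1}).

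For the lower bound ${\mathfrak B}_L$ I would follow the template of the classical Chung-Erd\"os proof given in the preliminaries. Writing $a = p[\Pi(h_1)]+p[\Pi(h_2)]+\bra{s}{\mathfrak D}(h_1,h_2)\ket{s}$, $b = p[\Pi(h_1\wedge h_2)]$ and $c = p[\Pi(h_1\vee h_2)]$, the central identity reads $a = c + b$, and the claim ${\mathfrak B}_L \le c$ is equivalent (after clearing the non-negative denominator $a + 2b = c + 3b$) to $c(a+2b) \ge a^2$. Substituting $a = c+b$ collapses this to $b(c - b) \ge 0$, whose first factor is $p[\Pi(h_1\wedge h_2)]\ge 0$ and whose second factor is non-negative because $h_1\wedge h_2 \prec h_1\vee h_2$ and $h\mapsto p[\Pi(h)]$ is monotone under $\prec$. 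This is the direct quantum analogue of the step $2p(A\cup B)-p(A)-p(B)\ge 0$ used in the classical proof of Eq.(\ref{fg2}).

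Part (2) amounts to showing that $\bra{s}{\mathfrak D}(h_1,h_2)\ket{s} = 0$ under each of the three sufficient conditions, since both ${\mathfrak B}_L$ and ${\mathfrak B}_U$ then collapse onto the classical bounds in Eq.(\ref{boole1}). When $\Pi(h_1)$ and $\Pi(h_2)$ commute the operator ${\mathfrak D}(h_1,h_2)$ itself vanishes, as already noted after Eq.(\ref{df2}): the two projectors generate a Boolean sublattice in which $\Pi(h_1\vee h_2)=\Pi(h_1)+\Pi(h_2)-\Pi(h_1)\Pi(h_2)$ and $\Pi(h_1\wedge h_2)=\Pi(h_1)\Pi(h_2)$, so the four terms in Eq.(\ref{df1}) cancel as operators. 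If $\ket{s}\in h_1\wedge h_2$ then $\Pi(h_1)\ket{s}=\Pi(h_2)\ket{s}=\Pi(h_1\wedge h_2)\ket{s}=\Pi(h_1\vee h_2)\ket{s}=\ket{s}$ and the expectation value becomes $1-1-1+1=0$. Symmetrically, if $\ket{s}\in(h_1\vee h_2)^{\perp}=h_1^{\perp}\wedge h_2^{\perp}$, all four projectors annihilate $\ket{s}$ and the expectation value is $0$. The only mild obstacle throughout is bookkeeping: the Chung-Erd\"os lower bound carries the quantum correction in both numerator and denominator, and one has to keep careful track of signs when reducing it to the elementary inequality $b(c-b)\ge 0$.
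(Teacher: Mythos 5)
Your proposal is correct and follows essentially the same route as the paper: the identity obtained from the definition of ${\mathfrak D}(h_1,h_2)$ plays the role of the classical $\delta(A,B)=0$, the upper bound follows by dropping $p[\Pi(h_1\wedge h_2)]\ge 0$, the lower bound by transplanting the classical Chung--Erd\"os factorization, and each sufficient condition makes the correction term vanish. Your reduction of the lower bound to $b(c-b)\ge 0$ with $c-b\ge 0$ justified by monotonicity under $\prec$ is in fact a more explicit (and welcome) rendering of the step the paper compresses into ``repeat the proof of Eqs.~(\ref{fg1}),~(\ref{fg2})''.
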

\begin{proof}
\mbox{}
\begin{itemize}
\item[(1)]
The expression for ${\mathfrak B}_U$ follows easily from Eq.(\ref{df1}).

Comparison of Eqs.(\ref{df}), (\ref{df1}) shows that the $p(A)+p(B)$ is replaced by $p[\Pi(h_1)]+p[\Pi(h_2)]+\bra{s}{\mathfrak D}(h_1, h_2)\ket{s}$.
With this in mind we repeat the proof of the  Chung-Erd\"os inequality in Eq.(\ref{fg}), given in Eqs(\ref{fg1}), (\ref{fg2}).
We get an expression analogous to Eq.(\ref{fg}), where $p(A)+p(B)$ is replaced by $p[\Pi(h_1)]+p[\Pi(h_2)]+\bra{s}{\mathfrak D}(h_1, h_2)\ket{s}$,
and this is ${\mathfrak B}_L$.

\item[(2)]
\begin{itemize}
\item
For commuting projectors ${\mathfrak D}(h_1, h_2)=0$, and then Eq.(\ref{ABC}) reduces to Eq.(\ref{boole1}).
\item
If the state $\ket{s}$ belongs to the space $h_1\wedge h_2$ then
\begin{eqnarray}
\Pi(h_1\vee h_2)\ket{s}=\Pi(h_1)\ket{s}=\Pi(h_2)\ket{s}=\Pi(h_1\wedge h_2)\ket{s}=\ket{s}.
\end{eqnarray}
From this follows that ${\mathfrak D}(h_1, h_2)\ket{s}=0$ and then Eq.(\ref{ABC}) reduces to Eq.(\ref{boole1}).
\item
If the state $\ket{s}$ belongs to the space $(h_1\vee h_2)^{\perp}$ then
\begin{eqnarray}
\Pi(h_1\vee h_2)\ket{s}=\Pi(h_1)\ket{s}=\Pi(h_2)\ket{s}=\Pi(h_1\wedge h_2)\ket{s}=0.
\end{eqnarray}
From this follows that ${\mathfrak D}(h_1, h_2)\ket{s}=0$ and then Eq.(\ref{ABC}) reduces to Eq.(\ref{boole1}).
\end{itemize}
\end{itemize}

\end{proof}
The inequality in Eq.(\ref{ABC}) is tight in the sense that there are examples for which it becomes equality.
Such an example is the case $h_2=h_1^{\perp}$.
Then
\begin{eqnarray}
&&\Pi(h_2)={\bf 1}-\Pi(h_1);\;\;\;\Pi(h_1\vee h_2)={\bf 1};\;\;\;\Pi(h_1\wedge h_2)=0,\nonumber\\
&&{\mathfrak D}(h_1, h_2)= {\bf 1}-\Pi(h_1)-[{\bf 1}-\Pi(h_1)]+0=0,
\end{eqnarray}
and ${\mathfrak B}_L = p[\Pi(h_1\vee h_2)]= {\mathfrak B}_U=1$.

The probabilities in the quantity
\begin{eqnarray}
\mu=p[\Pi(h_1)]+p[\Pi(h_2)]-p[\Pi(h_1\vee h_2)],
\end{eqnarray}
which is related to the right hand side of Eq.(\ref{boole1}),
can be measured. Since the $\Pi(h_1), \Pi(h_2)$ do not commute in general, different ensembles of the same state $\ket{s}$ should be used in the measurements.
Such measurements can confirm that $\mu$ takes in general both positive and negative values, and that if one of the sufficient conditions in the second part of proposition \ref{pro24} holds, 
then it takes only positive values.

Similar comment can be made for the left hand side of Eq.(\ref{boole1}).

\section{Boole and CHSH inequalities for bipartite systems}

\subsection{Quantum Boole inequalities for bipartite systems}\label{BB}

We express the Boole part in proposition \ref{pro24} in the context of bipartite systems, for subspaces of $H_A\otimes H_B$ with rank one.
\begin{cor}\label{cor1}
Let $h_1=h_{1A}\otimes h_{1B}$ and $h_2=h_{2A}\otimes h_{2B}$ be subspaces of $H_A\otimes H_B$ of  rank one.
Here $h_{1A}, h_{2A}$ are subspaces of $H_A$, and $h_{1B}, h_{2B}$ are subspaces of $H_B$.
Also let $p[\Pi(h)]=\bra{s}\Pi(h)\ket{s}$. Then
\begin{itemize}
\item[(1)]
The ${\mathfrak D}(h_1, h_2)$ can be expressed as
\begin{eqnarray}\label{df10}
{\mathfrak D}(h_1, h_2)&=&\Pi (h_1\vee h_2)
-\Pi_A(h_{1A})\otimes \Pi_B(h_{1B})-\Pi_A(h_{2A})\otimes \Pi_B(h_{2B})+\Pi(h_1\wedge h_2).
\end{eqnarray}
and obeys the relation
\begin{eqnarray}
&&[\Pi_A(h_{1A})\otimes \Pi_B(h_{1B}),\Pi_A(h_{2A})\otimes \Pi_B(h_{2B})]\nonumber\\&&=
{\mathfrak D}(h_1,h_2)[\Pi_A(h_{1A})\otimes \Pi_B(h_{1B})-\Pi_A(h_{2A})\otimes \Pi_B(h_{2B})]
\end{eqnarray}
It provides quantum corrections to various probabilistic inequalities.
\item[(2)]
\begin{eqnarray}\label{th}
p[\Pi(h_1\vee h_2)]&\le&
p[\Pi_A(h_{1A})\otimes \Pi_B(h_{1B})]+p[\Pi_A(h_{2A})\otimes \Pi_B(h_{2B})]+\bra{s} {\mathfrak D}(h_1,h_2) \ket{s}
\end{eqnarray}
The $\bra{s}{\mathfrak D}(h_1, h_2)\ket{s}$ is a quantum correction.
\item[(3)]
If the following sufficient condition holds
\begin{eqnarray}
[\Pi_A(h_{1A}), \Pi_A(h_{2A})]=[\Pi_B(h_{1B}), \Pi_B(h_{2B})]=0,
\end{eqnarray}
then the classical Boole inequality holds: 
\begin{eqnarray}\label{3c}
p\{\Pi[(h_{1A}\otimes h_{1B})\vee (h_{2A}\otimes h_{2B})]\}&\le&
p[\Pi_A(h_{1A})\otimes \Pi_B(h_{1B})]+p[\Pi_A(h_{2A})\otimes \Pi_B(h_{2B})].
\end{eqnarray}
\end{itemize}
\end{cor}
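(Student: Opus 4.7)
The corollary is essentially a specialization of proposition \ref{pro24} to the case of rank-one subspaces of a bipartite Hilbert space, so the strategy is to exploit the factorization $\Pi(h_{iA}\otimes h_{iB})=\Pi_A(h_{iA})\otimes \Pi_B(h_{iB})$ from Eq.(\ref{prod1}) and then invoke the earlier results.

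For part (1), the plan is to substitute the product form $\Pi(h_i)=\Pi_A(h_{iA})\otimes \Pi_B(h_{iB})$ (guaranteed by Eq.(\ref{prod1}) since each $h_i$ has rank one) directly into the defining equation (\ref{df1}) for ${\mathfrak D}(h_1,h_2)$, yielding Eq.(\ref{df10}) immediately. The accompanying commutator relation is obtained the same way: start from the general identity (\ref{df2}), namely $[\Pi(h_1),\Pi(h_2)]={\mathfrak D}(h_1,h_2)[\Pi(h_1)-\Pi(h_2)]$, and replace each $\Pi(h_i)$ by the corresponding tensor product. No further work is needed.

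Part (2) follows by taking the upper bound ${\mathfrak B}_U$ in proposition \ref{pro24} and rewriting $p[\Pi(h_1)]$ and $p[\Pi(h_2)]$ using the factorization of the projectors just established. That is, applying the quantum Boole inequality of proposition \ref{pro24} to the rank-one subspaces $h_i=h_{iA}\otimes h_{iB}$ and substituting $\Pi(h_i)=\Pi_A(h_{iA})\otimes \Pi_B(h_{iB})$ produces exactly Eq.(\ref{th}).

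For part (3), the main point is to show that local commutativity on each factor implies global commutativity on $H_A\otimes H_B$, so that the commuting-projector sufficient condition of proposition \ref{pro24} can be invoked. By direct expansion,
\begin{eqnarray}
\Pi(h_1)\Pi(h_2) &=& [\Pi_A(h_{1A})\Pi_A(h_{2A})]\otimes [\Pi_B(h_{1B})\Pi_B(h_{2B})], \nonumber\\
\Pi(h_2)\Pi(h_1) &=& [\Pi_A(h_{2A})\Pi_A(h_{1A})]\otimes [\Pi_B(h_{2B})\Pi_B(h_{1B})],
\end{eqnarray}
and these two operators coincide whenever both local commutators vanish. Hence $[\Pi(h_1),\Pi(h_2)]=0$, so by the first sufficient condition of proposition \ref{pro24} the correction term $\bra{s}{\mathfrak D}(h_1,h_2)\ket{s}$ vanishes and Eq.(\ref{th}) collapses to the classical Boole inequality (\ref{3c}). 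There is no real obstacle here; the only minor subtlety is to note that local commutativity is indeed \emph{sufficient}, not necessary, for the quantum correction to vanish on an arbitrary state, which is why the corollary is stated as a one-way implication.
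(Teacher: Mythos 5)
Your proposal is correct and follows essentially the same route as the paper: parts (1) and (2) are obtained by substituting the factorization $\Pi(h_{iA}\otimes h_{iB})=\Pi_A(h_{iA})\otimes \Pi_B(h_{iB})$ from Eq.(\ref{prod1}) into Eqs.(\ref{df1}), (\ref{df2}) and (\ref{ABC}), and part (3) by observing that the local commutators vanishing forces $[\Pi(h_1),\Pi(h_2)]=0$, which triggers the first sufficient condition of proposition \ref{pro24}. The only cosmetic difference is that the paper additionally cites proposition \ref{pro56} in part (1), which is not strictly needed for Eq.(\ref{df10}) since the $\Pi(h_1\vee h_2)$ and $\Pi(h_1\wedge h_2)$ terms are left unexpanded.
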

\begin{proof}
\begin{itemize}
\item[(1)]
We use Eq.(\ref{df1}) in conjunction with proposition \ref{pro56} and Eq.(\ref{prod1}), and we get Eq.(\ref{df10}).
\item[(2)]
We use Eq.(\ref{ABC}) in conjunction with Eq.(\ref{prod1}), and we get Eq.(\ref{th}).
\item[(3)]
From $[\Pi_A(h_{1A}), \Pi_A(h_{2A})]=[\Pi_B(h_{1A}), \Pi_B(h_{2A})]=0$, follows that 
\begin{eqnarray}
[\Pi(h_1), \Pi(h_2)]=[\Pi_A(h_{1A})\otimes \Pi_B(h_{1B}), \Pi_A(h_{2A})\otimes \Pi_B(h_{2B})]=0
\end{eqnarray}
Therefore the first condition of the second part of proposition \ref{pro24} holds.
Then  Eq.(\ref{boole1}) holds, and in the present context it becomes Eq.(\ref{3c}).
\end{itemize}
\end{proof}

The probabilities in the quantity
\begin{eqnarray}
\mu=
p[\Pi_A(h_{1A})\otimes \Pi_B(h_{1B})]+p[\Pi_A(h_{2A})\otimes \Pi_B(h_{2B})]-p\{\Pi[(h_{1A}\otimes h_{1B})\vee (h_{2A}\otimes h_{2B})]\}
\end{eqnarray}
can be measured using different ensembles of the same state $\ket{s}$ (because the projectors do not commute in general). In section \ref{pro10}, we explain that
the $p[\Pi_A(h_{1A})\otimes \Pi_B(h_{1B})]$ can be measured with a pair of local commuting measurements $\Pi_A\otimes {\bf 1}_B$ and ${\bf 1}_A\otimes \Pi_B$  
on the subsystems $A, B$, and classical communication between them. The same is true for $p[\Pi_A(h_{2A})\otimes \Pi_B(h_{2B})]$.
The $p\{\Pi[(h_{1A}\otimes h_{1B})\vee (h_{2A}\otimes h_{2B})]\}$ requires to perform non-local measurements (in general).

Such measurements can confirm the statements in corollary \ref{cor1}.

\begin{remark}
Related to the above results is the more general question of what is the maximum possible violation of 
the various Bell inequalities.
We mention the following results in the literature.
\begin{itemize}
\item[(1)]
Tsilerson \cite{TS1,TS2} gave inequalities that quantum probabilities and quantum correlations in bipartite systems, do obey.
\item[(2)]
Refs\cite{AGT,P} studied bounds based on Grothendieck's constant \cite{GR}.
\item[(3)]
Ref\cite{H} has studied a quantum analogue of the CHSH iequality for mixed two-qubit states.
\end{itemize}
We have already explained earlier that the inequality in Eq.(\ref{th}) is tight.
\end{remark}

\subsection{Some subspaces in a bipartite system for Boole and CHSH inequalities}\label{sub45}

For later use with Boole and CHSH inequalities, we define some subspaces of $H_A\otimes H_B$.
We first consider a two-dimensional space and the following orthonormal basis
\begin{eqnarray}
\ket{0}=
\begin{pmatrix}
0\\
1\\
\end{pmatrix};\;\;\;
\ket{1}=
\begin{pmatrix}
1\\
0\\
\end{pmatrix}
\end{eqnarray} 
With a unitary matrix
\begin{eqnarray}\label{fff}
U=
\begin{pmatrix}
a&b\\
-b^*&a^*\\
\end{pmatrix};\;\;\;|a|^2+|b|^2=1;\;\;\;a,b\ne 0,
\end{eqnarray} 
we transform the basis into
\begin{eqnarray}
U\ket{0}=
\begin{pmatrix}
b\\
a^*\\
\end{pmatrix};\;\;\;
U\ket{1}=
\begin{pmatrix}
a\\
-b^*\\
\end{pmatrix}.
\end{eqnarray} 

We next consider a bipartite system described with the Hilbert space $H_A\otimes H_B$ where $\dim (H_A)=\dim (H_B)=2$.
In each of the two Hilbert spaces, we introduce the above quantities with the extra indices $A, B$.
For example in $H_A$ we have the bases $\ket {0}_A, \ket{1}_A$ and $U\ket {0}_A, U\ket{1}_A$.
The unitary matrix $U$ is the same in both subsystems.

Let $W, X, Y, Z$ be four sets, each of which contains four one-dimensional subspaces of  $H_A\otimes H_B$.
The set $W$ contains the subspaces
\begin{eqnarray}
{\mathfrak H}_{1W}=\{\ket{1}_A\otimes \ket{1}_B\};\;\;
{\mathfrak H}_{2W}=\{\ket{1}_A\otimes \ket{0}_B\};\;\;
{\mathfrak H}_{3W}=\{\ket{0}_A\otimes \ket{1}_B\};\;\;
{\mathfrak H}_{4W}=\{\ket{0}_A\otimes \ket{0}_B\},
\end{eqnarray} 
the set $X$ contains the subspaces
\begin{eqnarray}
{\mathfrak H}_{1X}=\{\ket{1}_A\otimes U\ket{1}_B\};\;\;
{\mathfrak H}_{2X}=\{\ket{1}_A\otimes U\ket{0}_B\};\;\;
{\mathfrak H}_{3X}=\{\ket{0}_A\otimes U\ket{1}_B\};\;\;
{\mathfrak H}_{4X}=\{\ket{0}_A\otimes U\ket{0}_B\},
\end{eqnarray} 
the set $Y$ contains the subspaces
\begin{eqnarray}
{\mathfrak H}_{1Y}=\{U\ket{1}_A\otimes \ket{1}_B\};\;\;
{\mathfrak H}_{2Y}=\{U\ket{1}_A\otimes \ket{0}_B\};\;\;
{\mathfrak H}_{3Y}=\{U\ket{0}_A\otimes \ket{1}_B\};\;\;
{\mathfrak H}_{4Y}=\{U\ket{0}_A\otimes \ket{0}_B\},
\end{eqnarray} 
the set $Z$ contains the subspaces
\begin{eqnarray}
{\mathfrak H}_{1Z}=\{U\ket{1}_A\otimes U\ket{1}_B\};\;\;
{\mathfrak H}_{2Z}=\{U\ket{1}_A\otimes U\ket{0}_B\};\;\;
{\mathfrak H}_{3Z}=\{U\ket{0}_A\otimes U\ket{1}_B\};\;\;
{\mathfrak H}_{4Z}=\{U\ket{0}_A\otimes U\ket{0}_B\}.
\end{eqnarray} 

We also introduce the following two-dimensional subspaces of $H_A\otimes H_B$:
\begin{eqnarray}
&&{\mathfrak H}_{14W}={\mathfrak H}_{1W}\vee {\mathfrak H}_{4W};\;\;\;{\mathfrak H}_{23W}={\mathfrak H}_{2W}\vee {\mathfrak H}_{3W}\nonumber\\
&&{\mathfrak H}_{14X}={\mathfrak H}_{1X}\vee {\mathfrak H}_{4X};\;\;\;{\mathfrak H}_{23X}={\mathfrak H}_{2X}\vee {\mathfrak H}_{3X}\nonumber\\
&&{\mathfrak H}_{14Y}={\mathfrak H}_{1Y}\vee {\mathfrak H}_{4Y};\;\;\;{\mathfrak H}_{23Y}={\mathfrak H}_{2Y}\vee {\mathfrak H}_{3Y}\nonumber\\
&&{\mathfrak H}_{14Z}={\mathfrak H}_{1Z}\vee {\mathfrak H}_{4Z};\;\;\;{\mathfrak H}_{23Z}={\mathfrak H}_{2Z}\vee {\mathfrak H}_{3Z}.
\end{eqnarray}

It is convenient to perform the Kronecker multiplication and write all these vectors as $4\times 1$ matrices.
For example
\begin{eqnarray}
\ket{0}_A\otimes U\ket{0}_B=
\begin{pmatrix}
0\\
1\\
\end{pmatrix}\otimes
\begin{pmatrix}
b\\
a^*\\
\end{pmatrix}=
\begin{pmatrix}
0\\
0\\
b\\
a^*\\
\end{pmatrix};\;\;\;
U\ket{1}_A\otimes U\ket{0}_B=
\begin{pmatrix}
a\\
-b^*\\
\end{pmatrix}\otimes
\begin{pmatrix}
b\\
a^*\\
\end{pmatrix}=
\begin{pmatrix}
ab\\
|a|^2\\
-|b|^2\\
-a^*b^*\\
\end{pmatrix},
\end{eqnarray}
etc.
\begin{lemma}\label{LE1}
\begin{eqnarray}\label{VV}
{\mathfrak H}_{14W}\wedge {\mathfrak H}_{14X}\wedge {\mathfrak H}_{14Y}\wedge {\mathfrak H}_{23Z}={\cal O}.
\end{eqnarray}
\end{lemma}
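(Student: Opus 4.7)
The plan is to reduce the claim to routine linear algebra by setting up coordinates in $H_A\otimes H_B\cong \mathbb C^4$. I would fix the ordered basis $\{\ket{1}_A\otimes \ket{1}_B,\ \ket{1}_A\otimes \ket{0}_B,\ \ket{0}_A\otimes \ket{1}_B,\ \ket{0}_A\otimes \ket{0}_B\}$ so that each state is represented by a column of four complex numbers, and then write out the general element of each of the four two-dimensional subspaces appearing in the lemma by Kronecker-multiplying the column vectors for $\ket{0},\ket{1}$ and their $U$-rotated counterparts $U\ket{0}=(b,a^*)^T$, $U\ket{1}=(a,-b^*)^T$.

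With these conventions, a general vector in ${\mathfrak H}_{14W}$ has coordinates $(c_1,0,0,c_2)^T$, while a general vector in ${\mathfrak H}_{14X}$ has coordinates $(d_1 a,\,-d_1 b^*,\,d_2 b,\,d_2 a^*)^T$. Equating these and reading off the middle two entries yields $d_1 b^*=0$ and $d_2 b=0$; by the non-degeneracy hypothesis $a,b\ne 0$ in Eq.~(\ref{fff}) this forces $d_1=d_2=0$, hence also $c_1=c_2=0$. Thus ${\mathfrak H}_{14W}\wedge {\mathfrak H}_{14X}=\mathcal O$ already, before the remaining two subspaces are used at all.

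Since the conjunction $\wedge$ is monotone --- intersecting with further subspaces can only shrink the meet --- one concludes
\[
 {\mathfrak H}_{14W}\wedge {\mathfrak H}_{14X}\wedge {\mathfrak H}_{14Y}\wedge {\mathfrak H}_{23Z}\ \prec\ {\mathfrak H}_{14W}\wedge {\mathfrak H}_{14X}=\mathcal O,
\]
which is the statement of the lemma.

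I do not anticipate a real obstacle: the argument is a short coordinate computation, and the only care required is getting the Kronecker-product ordering right and invoking $b\ne 0$ at the critical step. If one wanted a more symmetric proof that visibly engages the other two subspaces ${\mathfrak H}_{14Y}$ and ${\mathfrak H}_{23Z}$, mirroring the Frechet-type structure of a CHSH argument, one could instead write the eight spanning columns of the four subspaces as a $4\times 8$ matrix and verify by row reduction that the four membership constraints are jointly inconsistent for any nonzero vector; but the short-cut above is enough to establish the lemma as stated.
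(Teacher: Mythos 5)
Your proof is correct and is essentially the paper's own argument: the paper likewise writes out the general vectors of the four subspaces in coordinates, observes that ${\mathfrak H}_{14W}\wedge {\mathfrak H}_{14X}={\cal O}$ already because $b\ne 0$, and concludes that the four-fold meet is therefore zero.
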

\begin{proof}
The two-dimensional spaces entering in Eq.(\ref{VV}) contain the following vectors that depend on two parameters: 
\begin{eqnarray}\label{301}
&&{\mathfrak H}_{14W}=\left \{
\begin{pmatrix}
\kappa _W\\
0\\
0\\
\lambda _W\\
\end{pmatrix}\right\};\;\;
{\mathfrak H}_{14X}=\left \{
\begin{pmatrix}
\kappa _Xa\\
-\kappa _Xb^*\\
\lambda _Xb\\
\lambda _Xa^*\\
\end{pmatrix}\right\}\nonumber\\&&
{\mathfrak H}_{14Y}=\left \{
\begin{pmatrix}
\kappa _Ya\\
\lambda _Yb\\
-\kappa _Yb ^*\\
\lambda _Ya^*\\
\end{pmatrix}\right\};\;\;
{\mathfrak H}_{23Z}=\left \{
\begin{pmatrix}
(\kappa _Z+\lambda _Z)ab\\
\kappa _Z|a|^2-\lambda _Z|b|^2\\
-\kappa _Z|b|^2+\lambda _Z|a|^2\\
-(\kappa _Z+\lambda _Z)a^*b^*\\
\end{pmatrix}\right\}.
\end{eqnarray}
We try to find a common vector in these four subspaces.
We have assumed $b\ne 0$, and therefore the spaces ${\mathfrak H}_{14W}, {\mathfrak H}_{14X}$ only have the zero vector in common.
Therefore these four spaces cannot have a non-zero common vector, and this completes the proof. 
\end{proof}

\subsection{Classical Boole inequalities hold for rank one states and are violated by rank two states}

For later use with CHSH inequalities, we give in Eq.(\ref{AX}) below a Boole inequality which holds for all rank one states, and is violated by some rank two states.
We also show that a second Boole inequality in Eq.(\ref{AXX}),  which involves only a few of the terms in the first one, does not hold even for rank one states.
This shows that extra care is needed with relations that involve Boole inequalities in a quantum context. 
\begin{lemma}\label{L45}
Let $h$ be a subspace of $H_A\otimes H_B$ (with $\dim (H_A)=\dim(H_B)=2$), and $\ket{s_A}\otimes \ket{s_B}$ be an arbitrary rank one state. Also
the notation for various subspaces in subsection \ref{sub45} is used, and
\begin{eqnarray}
p[\Pi(h)]=[\bra{s_A}\otimes \bra{s_B}]\Pi(h)[\ket{s_A}\otimes \ket{s_B}].
\end{eqnarray}
Then 
\begin{itemize}
\item[(1)]
The following Boole inequality holds
\begin{eqnarray}\label{AX}
p[\Pi({\mathfrak H}_{23W})]+
p[\Pi({\mathfrak H}_{23X})]+
p[\Pi({\mathfrak H}_{23Y})]+
p[\Pi({\mathfrak H}_{14Z})]\ge
p[\Pi({\mathfrak H}_{23W}\vee {\mathfrak H}_{23X}\vee {\mathfrak H}_{23Y}\vee {\mathfrak H}_{14Z})]=1.
\end{eqnarray}

\item[(2)]
The quantity
\begin{eqnarray}\label{AXX}
\Omega ^{\prime}=p[\Pi({\mathfrak H}_{23W})]+p[\Pi({\mathfrak H}_{23X})]-p[\Pi({\mathfrak H}_{23W} \vee {\mathfrak H}_{23X})],
\end{eqnarray}
takes both positive and negative values, and therefore Boole inequality in this case does not hold even for rank one states.
\end{itemize}
\end{lemma}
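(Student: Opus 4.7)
The plan is to split part (1) into the identity $p[\Pi({\mathfrak H}_{23W}\vee {\mathfrak H}_{23X}\vee {\mathfrak H}_{23Y}\vee {\mathfrak H}_{14Z})]=1$ and the Boole-type inequality for rank-one states. For the identity, observe that for each label $*\in\{W,X,Y,Z\}$ the four one-dimensional spaces ${\mathfrak H}_{i*}$ are obtained from a tensor product of orthonormal bases of $H_A$ and $H_B$, so the two-dimensional spaces ${\mathfrak H}_{23*}$ and ${\mathfrak H}_{14*}$ are orthogonal complements of each other in $H_A\otimes H_B$. Applying the de Morgan rule from Eq.(\ref{3}) and then Lemma \ref{LE1},
\begin{eqnarray*}
({\mathfrak H}_{23W}\vee {\mathfrak H}_{23X}\vee {\mathfrak H}_{23Y}\vee {\mathfrak H}_{14Z})^{\perp}={\mathfrak H}_{14W}\wedge {\mathfrak H}_{14X}\wedge {\mathfrak H}_{14Y}\wedge {\mathfrak H}_{23Z}={\cal O},
\end{eqnarray*}
so the disjunction is the full space and its probability is $1$ for any normalized $\ket{s}$.

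For the inequality itself I would exploit the factorization of rank-one probabilities by passing to a classical model. Introduce binary random variables $X_1,X_2$ for the outcomes of the two bases on $A$ and $Y_1,Y_2$ for the two bases on $B$, with quantum marginals, and couple them on the product probability space (consistent with the quantum marginals of every pair $(X_i,Y_j)$ because $\ket{s}=\ket{s_A}\otimes\ket{s_B}$). A direct computation identifies
\begin{eqnarray*}
p[\Pi({\mathfrak H}_{23W})]&=&\Pr(X_1\neq Y_1),\;\;\;p[\Pi({\mathfrak H}_{23X})]=\Pr(X_1\neq Y_2),\\
p[\Pi({\mathfrak H}_{23Y})]&=&\Pr(X_2\neq Y_1),\;\;\;p[\Pi({\mathfrak H}_{14Z})]=1-\Pr(X_2\neq Y_2),
\end{eqnarray*}
so the desired inequality reduces to $\Pr(X_2\neq Y_2)\le \Pr(X_2\neq Y_1)+\Pr(Y_1\neq X_1)+\Pr(X_1\neq Y_2)$. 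This follows at once from the set inclusion $\{X_2\neq Y_2\}\subseteq \{X_2\neq Y_1\}\cup \{Y_1\neq X_1\}\cup \{X_1\neq Y_2\}$ by monotonicity of classical probability, i.e.\ from the triangle inequality for the disagreement pseudo-metric on $\{0,1\}$-valued variables applied twice along $X_2\to Y_1\to X_1\to Y_2$.

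For part (2) I would first reduce $\Omega'$. A vector in ${\mathfrak H}_{23W}$ has the form $(0,\kappa,\lambda,0)^T$ in the coordinates of Eq.(\ref{301}), while a vector in ${\mathfrak H}_{23X}$ has the form $(\kappa'b,\kappa'a^*,\lambda'a,-\lambda'b^*)^T$. Matching the two forces $\kappa'=\lambda'=0$, since $b\neq 0$ by Eq.(\ref{fff}), so ${\mathfrak H}_{23W}\wedge {\mathfrak H}_{23X}={\cal O}$, the dimensions add to four, and ${\mathfrak H}_{23W}\vee {\mathfrak H}_{23X}=H_A\otimes H_B$. Consequently $\Omega'=p[\Pi({\mathfrak H}_{23W})]+p[\Pi({\mathfrak H}_{23X})]-1$, and two factorizable states suffice. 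For $\ket{0}_A\otimes \ket{0}_B$ direct calculation gives $p[\Pi({\mathfrak H}_{23W})]=0$ and $p[\Pi({\mathfrak H}_{23X})]=|b|^2$, hence $\Omega'=-|a|^2<0$; for $\ket{1}_A\otimes U\ket{0}_B$ one finds $p[\Pi({\mathfrak H}_{23W})]=|a|^2$ and $p[\Pi({\mathfrak H}_{23X})]=1$, hence $\Omega'=|a|^2>0$. Both strict inequalities rely on $a,b\neq 0$.

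The only substantive conceptual step is the classical coupling in part (1)(b). The four projectors $\Pi({\mathfrak H}_{23*}),\Pi({\mathfrak H}_{14Z})$ are mutually non-commuting and have no joint quantum distribution in general; it is precisely the rank-one assumption that decouples the $A$- and $B$-outcomes and allows the four variables to be embedded in a single product probability space, after which the inequality is just the iterated disagreement triangle inequality. Part (2) is then a short computation that shows why subtracting arbitrary terms from a valid Boole-type inequality in (1) already breaks positivity on factorizable states.
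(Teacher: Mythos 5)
Your proof is correct, and part (1) takes a genuinely different route from the paper's. The identity $p[\Pi({\mathfrak H}_{23W}\vee {\mathfrak H}_{23X}\vee {\mathfrak H}_{23Y}\vee {\mathfrak H}_{14Z})]=1$ is established exactly as in the paper (orthocomplement of Lemma \ref{LE1} via de Morgan). For the inequality itself, however, the paper writes out the eight probability terms, reduces to $\Omega=p_A+p_B+p_A^{\prime}p_B^{\prime}-p_Ap_B-p_A^{\prime}p_B-p_Ap_B^{\prime}$ in the notation of Eq.(\ref{nota}), and verifies $\Omega\ge 0$ by a three-way case analysis on the signs of $p_A^{\prime}-p_A$ and $p_B^{\prime}-p_B$. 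You instead realize the four quantum probabilities as disagreement probabilities of four \emph{independent} Bernoulli variables $X_1,X_2,Y_1,Y_2$ with parameters $p_A,p_A^{\prime},p_B,p_B^{\prime}$ --- a coupling that reproduces every pairwise $(X_i,Y_j)$ marginal precisely because the state factorizes --- and then invoke the inclusion $\{X_2\neq Y_2\}\subseteq\{X_2\neq Y_1\}\cup\{Y_1\neq X_1\}\cup\{X_1\neq Y_2\}$ together with classical subadditivity. I checked the four identifications ($p[\Pi({\mathfrak H}_{23W})]=\Pr(X_1\neq Y_1)$, etc.) against the explicit spanning vectors and they are all correct. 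Your argument buys conceptual clarity: it makes explicit that rank-one states satisfy the inequality because they admit a single classical probability space carrying all four observables, which is the Fine-type mechanism underlying Bell/CHSH inequalities and ties directly into the paper's later Proposition \ref{723}. The paper's case analysis buys self-containedness and avoids having to justify the coupling. For part (2) your route is essentially the paper's: you establish ${\mathfrak H}_{23W}\vee{\mathfrak H}_{23X}=H_A\otimes H_B$ (by a direct intersection computation rather than by orthocomplementing ${\mathfrak H}_{14W}\wedge{\mathfrak H}_{14X}={\cal O}$, an immaterial difference), and where the paper derives the closed form $\Omega^{\prime}=(2p_A-1)(1-p_B-p_B^{\prime})$ you exhibit two explicit product states giving $\Omega^{\prime}=-|a|^2<0$ and $\Omega^{\prime}=+|a|^2>0$; both computations are correct and both correctly use $a,b\neq 0$.
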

\begin{proof}
\mbox{}
\begin{itemize}
\item[(1)]
We first prove that 
\begin{eqnarray}\label{69}
p[\Pi({\mathfrak H}_{23W}\vee {\mathfrak H}_{23X}\vee {\mathfrak H}_{23Y}\vee {\mathfrak H}_{14Z})]=1.
\end{eqnarray}
We take the orthocomplement of Eq.(\ref{VV}) and using the fact that
\begin{eqnarray}\label{A}
({\mathfrak H}_{14W})^\perp ={\mathfrak H}_{23W};\;\;\;
 ({\mathfrak H}_{14X})^\perp={\mathfrak H}_{23X};\;\;\;
({\mathfrak H}_{14Y})^\perp={\mathfrak H}_{23Y};\;\;\;
({\mathfrak H}_{23Z})^\perp={\mathfrak H}_{14Z},
\end{eqnarray}
we prove that
\begin{eqnarray}\label{A}
{\mathfrak H}_{23W}\vee {\mathfrak H}_{23X}\vee {\mathfrak H}_{23Y}\vee {\mathfrak H}_{14Z}=H_A\otimes H_B.
\end{eqnarray}
From this follows Eq.(\ref{69}).

For the rank $1$ state $\ket{s_A}\otimes \ket{s_B}$, we write the inequality in  Eq.(\ref{AX}) as
\begin{eqnarray}
&&|\langle s_A\ket{1}|^2|\langle s_B\ket{0}|^2+|\langle s_A\ket{0}|^2|\langle s_B\ket{1}|^2+
|\langle s_A\ket{1}|^2|\langle s_B|U\ket{0}|^2+|\langle s_A\ket{0}|^2|\langle s_B|U\ket{1}|^2\nonumber\\&&+
|\langle s_A|U\ket{1}|^2|\langle s_B\ket{0}|^2+|\langle s_A|U\ket{0}|^2|\langle s_B\ket{1}|^2+
|\langle s_A|U\ket{1}|^2|\langle s_B|U\ket{1}|^2+|\langle s_A|U\ket{0}|^2|\langle s_B|U\ket{0}|^2\nonumber\\&&\ge 1.
\end{eqnarray}
Using the notation
\begin{eqnarray}\label{nota}
p_A=|\langle s_A\ket{1}|^2;\;\;\;p_B=|\langle s_B\ket{1}|^2;\;\;\;p_A^{\prime}=|\langle s_A|U\ket{1}|^2;\;\;\;p_B^{\prime}=|\langle s_B|U\ket{1}|^2,
\end{eqnarray}
we rewrite this as
\begin{eqnarray}
\Omega=p_A+p_B+p_A^{\prime}p_B^{\prime}-p_Ap_B-p_A^{\prime}p_B-p_Ap_B^{\prime}\ge 0.
\end{eqnarray}
There are three possible cases. 
\begin{itemize}
\item
If $p_A^{\prime}\ge p_A$ we write $\Omega$ as
\begin{eqnarray}
\Omega=(p_A^{\prime}-p_A)p_B^{\prime}+p_A(1-p_B)+p_B(1-p_A^{\prime})\ge 0.
\end{eqnarray}
\item
If $p_B^{\prime}\ge p_B$ we write $\Omega$ as
\begin{eqnarray}
\Omega=(p_B^{\prime}-p_B)p_A^{\prime}+p_B(1-p_A)+p_A(1-p_B^{\prime})\ge 0.
\end{eqnarray}
\item
If $p_A^{\prime}\le p_A$ and $p_B^{\prime}\le p_B$ we write $\Omega$ as
\begin{eqnarray}
\Omega=(p_A-p_A^{\prime})(p_B-p_B^{\prime})+p_A(1-p_B)+p_B(1-p_A)\ge 0.
\end{eqnarray}
\end{itemize}
It is seen that in all possible cases $\Omega\ge 0$, and this completes the proof that rank one states obey the inequality in Eq.(\ref{AX}).
\item[(2)]
We have assumed $b\ne 0$, and therefore ${\mathfrak H}_{14W}\wedge {\mathfrak H}_{14X}={\cal O}$. 
Taking the orthocomplement of this we get  $({\mathfrak H}_{14W})^{\perp}\vee ({\mathfrak H}_{14X})^{\perp}={\mathfrak H}_{23W}\vee{\mathfrak H}_{23X}=H_A\otimes H_B$.
Therefore $p[\Pi({\mathfrak H}_{23W} \vee {\mathfrak H}_{23X})]=1$.
Using  the rank one state $\ket{s_A}\otimes \ket{s_B}$ and the notation of Eq.(\ref{nota}), we get
\begin{eqnarray}
\Omega ^{\prime}&=&p[\Pi({\mathfrak H}_{23W})]+
p[\Pi({\mathfrak H}_{23X})]-1\nonumber\\&=&
|\langle s_A\ket{1}|^2|\langle s_B\ket{0}|^2+|\langle s_A\ket{0}|^2|\langle s_B\ket{1}|^2+
|\langle s_A\ket{1}|^2|\langle s_B|U\ket{0}|^2+|\langle s_A\ket{0}|^2|\langle s_B|U\ket{1}|^2-1\nonumber\\&=&
p_A(1-p_B)+(1-p_A)p_B+p_A(1-p_B^{\prime})+(1-p_A)p_B^{\prime}-1\nonumber\\&=&(2p_A-1)(1-p_B-p_B^{\prime}).
\end{eqnarray}
It is seen that $\Omega ^{\prime}$ can take both positive or negative values.
So this Boole inequality does not hold even with rank $1$ states.
\end{itemize}
\end{proof}

We next give an example where the Boole inequality of Eq.(\ref{AX}) is violated by rank two states.
We consider the case where $a=b=1/\sqrt{2}$ in Eq.(\ref{fff}) and 
\begin{eqnarray}\label{exa}
\ket{0}=
\begin{pmatrix}
0\\
1\\
\end{pmatrix};\;\;\;
\ket{1}=
\begin{pmatrix}
1\\
0\\
\end{pmatrix}
;\;\;\;
U\ket{0}=\frac{1}{\sqrt 2}
\begin{pmatrix}
1\\
1\\
\end{pmatrix};\;\;\;
U\ket{1}=\frac{1}{\sqrt 2}
\begin{pmatrix}
1\\
-1\\
\end{pmatrix}.
\end{eqnarray} 
From this we calculated the projectors
\begin{eqnarray}\label{exa1}
&&\Pi({\mathfrak H}_{23W})=
\begin{pmatrix}
0&0&0&0\\
0&1&0&0\\
0&0&1&0\\
0&0&0&0\\
\end{pmatrix};\;\;\;
\Pi({\mathfrak H}_{23X})=\frac{1}{2}
\begin{pmatrix}
1&1&0&0\\
1&1&0&0\\
0&0&1&-1\\
0&0&-1&1\\
\end{pmatrix};\nonumber\\&&
\Pi({\mathfrak H}_{23Y})=
\begin{pmatrix}
0&0&0&0\\
0&1&0&0\\
0&0&0&0\\
0&0&0&1\\
\end{pmatrix};\;\;\;
\Pi( {\mathfrak H}_{14Z})=\frac{1}{2}
\begin{pmatrix}
1&0&0&1\\
0&1&1&0\\
0&1&1&0\\
1&0&0&1\\
\end{pmatrix}.
\end{eqnarray} 
In this example, the matrix 
\begin{eqnarray}\label{cv}
M=\Pi({\mathfrak H}_{23W})+\Pi({\mathfrak H}_{23X})+\Pi({\mathfrak H}_{23Y})+\Pi({\mathfrak H}_{14Z})-{\bf 1}
\end{eqnarray}
which enters in the inequality of Eq.(\ref{AX}), has the eigenvalues $-0.30$, $0.45$, $1.55$, $2.30$.
The fact that it has both positive and negative eigenvalues proves that the inequality is violated.

\subsection{CHSH inequalities hold for rank one states and violated by rank two states}

In the present section we use the logical derivation to CHSH inequalities \cite{C8,C9,C10}.
Our aim is to study the effect of the rank on CHSH inequalities.
In particular we show that all states of rank $1$ obey certain CHSH inequalities, while some states of rank $2$ violate them.

The following proposition  is given in the present context, with subspaces of $H_A\otimes H_B$ where $\dim (H_A)=\dim (H_B)=2$.
We emphasize that it is based heavily on the assumption that Boole's inequality  holds.

\begin{proposition}\label{723}
Let  $h_1,...,h_n$ be subspaces of $H_A\otimes H_B$ such that $h_1\wedge...\wedge h_n={\cal O}$, and $p[\Pi(h_i)]=\bra{s} \Pi(h_i)\ket{s}$.
If the following Boole's inequality holds, 
\begin{eqnarray}\label{B2}
p[\Pi(h_1^{\bot})]+...+p[\Pi(h_n^{\bot})]\ge p[\Pi(h_1^{\bot}\vee...\vee h_n^{\bot})],
\end{eqnarray}
then
\begin{eqnarray}\label{77}
\sum _{i=1}^np[\Pi(h_i)]\le n -1.
\end{eqnarray}

\end{proposition}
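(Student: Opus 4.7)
The plan is very short: the statement is essentially a rewriting of the hypothesis via the orthocomplement duality from Eq.~(\ref{3}) (extended to $n$ subspaces) together with the complementary-probability identity $p[\Pi(h^{\perp})] = 1 - p[\Pi(h)]$.

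First I would take the orthocomplement of the assumption $h_1 \wedge \dots \wedge h_n = \mathcal{O}$. Applying De Morgan (the second line of Eq.~(\ref{3})) inductively in $n$, this becomes $h_1^{\perp} \vee \dots \vee h_n^{\perp} = \mathcal{I} = H_A \otimes H_B$, so that $\Pi(h_1^{\perp} \vee \dots \vee h_n^{\perp}) = \mathbf{1}$ and therefore
\begin{equation*}
p[\Pi(h_1^{\perp} \vee \dots \vee h_n^{\perp})] = \bra{s}\mathbf{1}\ket{s} = 1.
\end{equation*}

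Next I would feed this into the assumed Boole inequality (\ref{B2}), which now reads $\sum_{i=1}^n p[\Pi(h_i^{\perp})] \ge 1$. Finally, using $\Pi(h_i^{\perp}) = \mathbf{1} - \Pi(h_i)$ (which follows from the first two relations of Eq.~(\ref{3})) to write $p[\Pi(h_i^{\perp})] = 1 - p[\Pi(h_i)]$, the left-hand side becomes $n - \sum_i p[\Pi(h_i)]$, and rearranging gives exactly (\ref{77}).

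There is no real obstacle here: every step is a one-line manipulation, and the entire content of the proposition is that the Frechet-style inequality (\ref{frec}) is the orthocomplement dual of Boole's inequality, valid whenever the hypothesis $h_1 \wedge \dots \wedge h_n = \mathcal{O}$ guarantees that the join of the orthocomplements exhausts the full Hilbert space. The only thing worth double-checking is that De Morgan extends from $n=2$ (as stated in Eq.~(\ref{3})) to arbitrary $n$ by induction, which is standard in any orthocomplemented lattice.
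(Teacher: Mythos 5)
Your proof is correct and follows exactly the paper's own argument: orthocomplementing $h_1\wedge\dots\wedge h_n={\cal O}$ to get $h_1^{\perp}\vee\dots\vee h_n^{\perp}=H_A\otimes H_B$, hence probability one, then applying the assumed Boole inequality and the identity $p[\Pi(h_i^{\perp})]=1-p[\Pi(h_i)]$. No differences worth noting.
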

\begin{proof}
We start with the relation
\begin{eqnarray}\label{502}
 h_1^{\bot}\vee...\vee h_n^{\bot}=(h_1\wedge...\wedge h_n)^{\bot}=H_A\otimes H_B.
\end{eqnarray}
Therefore
\begin{eqnarray}\label{B1}
p[\Pi(h_1^{\bot}\vee...\vee h_n^{\bot})]=1.
\end{eqnarray}
From Boole's inequality in Eq.(\ref{B2}), it follows that
\begin{eqnarray}
p[\Pi(h_1^{\bot})]+...+p[\Pi(h_n^{\bot})]\ge 1.
\end{eqnarray}
Using the relation
\begin{eqnarray}
p[\Pi(h_i^{\bot})]=1-p[\Pi( h_i)],
\end{eqnarray}
we get Eq.(\ref{77}).
\end{proof}
Eq.(\ref{77}) is the analogue in the present context, of the Frechet inequality in Eq.(\ref{frec}).
It requires the validity of Eq.(\ref{B2}) because it does not have any quantum corrections.
The following proposition gives the quantum Frechet inequality (that contains quantum corrections and holds for all quantum states) for the simple case of two subspaces.
\begin{proposition}
Let  $h_1,h_2$ be subspaces of $H_A\otimes H_B$ such that $h_1\wedge h_2={\cal O}$, and $p[\Pi(h_i)]=\bra{s} \Pi(h_i)\ket{s}$. Then
\begin{eqnarray}
p[\Pi(h_1)]+p[\Pi(h_2)]\le 1-\bra{s} {\mathfrak D}(h_1,h_2)\ket{s}.
\end{eqnarray}
This is the analogue of the Frechet inequality in Eq.(\ref{frec}).

\end{proposition}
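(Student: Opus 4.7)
The plan is to read the inequality off directly from the definition of $\mathfrak{D}(h_1,h_2)$ together with the trivial bound $\Pi(h_1\vee h_2)\preceq \mathbf{1}$.

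First I would specialize the defining equation
\begin{eqnarray*}
\mathfrak{D}(h_1,h_2)=\Pi(h_1\vee h_2)-\Pi(h_1)-\Pi(h_2)+\Pi(h_1\wedge h_2)
\end{eqnarray*}
to the hypothesis $h_1\wedge h_2=\mathcal{O}$, which makes $\Pi(h_1\wedge h_2)=0$. Taking the expectation value in the state $\ket{s}$ then gives the operator identity in probability form,
\begin{eqnarray*}
p[\Pi(h_1)]+p[\Pi(h_2)]=p[\Pi(h_1\vee h_2)]-\bra{s}\mathfrak{D}(h_1,h_2)\ket{s}.
\end{eqnarray*}

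Next I would invoke $\Pi(h_1\vee h_2)\preceq \mathbf{1}$, which yields $p[\Pi(h_1\vee h_2)]\le 1$, and substitute into the previous display to conclude
\begin{eqnarray*}
p[\Pi(h_1)]+p[\Pi(h_2)]\le 1-\bra{s}\mathfrak{D}(h_1,h_2)\ket{s}.
\end{eqnarray*}
This is exactly the claimed quantum Frechet inequality. Note that the analogous classical Frechet bound (Eq.~(\ref{frec}) with $n=2$) is recovered precisely when the quantum correction $\bra{s}\mathfrak{D}(h_1,h_2)\ket{s}$ vanishes, which is consistent with the sufficient conditions listed in proposition \ref{pro24}.

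There is really no obstacle: the argument is a one-line rearrangement of the definition of $\mathfrak{D}$ together with the fact that probabilities are bounded by one. The only thing to watch is that the hypothesis $h_1\wedge h_2=\mathcal{O}$ is used in exactly the same role as $A_1\cap A_2=\emptyset$ in the classical case, so the form of the bound mirrors Eq.~(\ref{frec}) for $n=2$.
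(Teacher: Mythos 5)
Your proof is correct, but it takes a more direct route than the paper. The paper mirrors its derivation of the CHSH-type inequality in proposition \ref{723}: it applies the quantum Boole inequality of Eq.~(\ref{ABC}) to the orthocomplements $h_1^{\perp},h_2^{\perp}$, uses $h_1^{\perp}\vee h_2^{\perp}=(h_1\wedge h_2)^{\perp}=H_A\otimes H_B$ so that $p[\Pi(h_1^{\perp}\vee h_2^{\perp})]=1$ exactly, and then converts back via $p[\Pi(h_i^{\perp})]=1-p[\Pi(h_i)]$ together with the identity ${\mathfrak D}(h_1^{\perp},h_2^{\perp})=-{\mathfrak D}(h_1,h_2)$. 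You instead rearrange the definition of ${\mathfrak D}(h_1,h_2)$ directly, using $\Pi(h_1\wedge h_2)=0$ and the bound $p[\Pi(h_1\vee h_2)]\le 1$. The two arguments are equivalent in substance (the quantum Boole upper bound is itself just the definition of ${\mathfrak D}$ plus positivity of $p[\Pi(h_1\wedge h_2)]$), but yours is more elementary: it needs neither the de~Morgan relation nor the antisymmetry of ${\mathfrak D}$ under orthocomplementation, and it makes transparent that equality holds precisely when $\ket{s}$ lies in $h_1\vee h_2$. What the paper's route buys is the structural parallel with the classical derivation of the Frechet inequality from Boole's inequality, which is the conceptual point the section is making.
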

\begin{proof}
The proof is the same as in the previous proposition, but here we use the quantum Boole inequality in Eq.(\ref{ABC}) 
which has the extra term $\bra{s} {\mathfrak D}(h_1^{\perp},h_2^{\perp})\ket{s}$:
 \begin{eqnarray}
p[\Pi(h_1^{\bot}\vee h_2^{\bot})] \le p[\Pi(h_1^{\bot})]+p[\Pi(h_n^{\bot})]+\bra{s} {\mathfrak D}(h_1^{\perp},h_2^{\perp})\ket{s}.
\end{eqnarray}
We also use the fact that ${\mathfrak D}(h_1^{\perp},h_2^{\perp})=-{\mathfrak D}(h_1,h_2)$.

\end{proof}

We next prove that CHSH inequalities hold for all rank one states and violated by some rank two states.
\begin{proposition}
Let $h$ be a subspace of $H_A\otimes H_B$ (with $\dim (H_A)=\dim(H_B)=2$), and $\ket{s_A}\otimes \ket{s_B}$ be an arbitrary rank one state. Also
the notation for various subspaces in subsection \ref{sub45} is used, and
\begin{eqnarray}
p[\Pi(h)]=[\bra{s_A}\otimes \bra{s_B}]\Pi(h)[\ket{s_A}\otimes \ket{s_B}].
\end{eqnarray}
Then 
\begin{eqnarray}\label{B}
&&p[\Pi({\mathfrak H}_{1W}]+p[\Pi({\mathfrak H}_{4W})]+
p[\Pi({\mathfrak H}_{1X}]+ p[\Pi({\mathfrak H}_{4X})]\nonumber\\&&+
p[\Pi({\mathfrak H}_{1Y}]+ p[\Pi({\mathfrak H}_{4Y})]+
p[\Pi({\mathfrak H}_{2Z}]+ p[\Pi({\mathfrak H}_{3Z})]\le 3.
\end{eqnarray}
Some states of rank two violate this inequality.
\end{proposition}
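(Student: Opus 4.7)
The plan is to reduce Eq.(\ref{B}) to the Boole inequality Eq.(\ref{AX}) already handled in Lemma \ref{L45}, via Proposition \ref{723}. The first step is to collapse the eight terms into four. The pairs $\{\mathfrak{H}_{1W},\mathfrak{H}_{4W}\}$, $\{\mathfrak{H}_{1X},\mathfrak{H}_{4X}\}$, $\{\mathfrak{H}_{1Y},\mathfrak{H}_{4Y}\}$ and $\{\mathfrak{H}_{2Z},\mathfrak{H}_{3Z}\}$ each consist of two orthogonal one-dimensional subspaces, so the corresponding projectors add to the projectors onto $\mathfrak{H}_{14W}$, $\mathfrak{H}_{14X}$, $\mathfrak{H}_{14Y}$, $\mathfrak{H}_{23Z}$ respectively. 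Hence, for any state $\ket{s}$, the left-hand side of Eq.(\ref{B}) is equal to
\begin{equation*}
S(\ket{s})=p[\Pi(\mathfrak{H}_{14W})]+p[\Pi(\mathfrak{H}_{14X})]+p[\Pi(\mathfrak{H}_{14Y})]+p[\Pi(\mathfrak{H}_{23Z})].
\end{equation*}

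Next I would apply Proposition \ref{723} with $n=4$ and $h_1=\mathfrak{H}_{14W}$, $h_2=\mathfrak{H}_{14X}$, $h_3=\mathfrak{H}_{14Y}$, $h_4=\mathfrak{H}_{23Z}$. The meet condition $h_1\wedge h_2\wedge h_3\wedge h_4=\mathcal{O}$ is exactly Lemma \ref{LE1}. Taking orthocomplements via Eq.(\ref{A}) turns $h_i^{\perp}$ into $\mathfrak{H}_{23W},\mathfrak{H}_{23X},\mathfrak{H}_{23Y},\mathfrak{H}_{14Z}$, and the Boole inequality in Eq.(\ref{B2}) that Proposition \ref{723} requires is therefore precisely Eq.(\ref{AX}) of Lemma \ref{L45}(1), whose right-hand side was shown to equal $1$. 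Since Lemma \ref{L45}(1) established that Eq.(\ref{AX}) holds on every rank one state, Proposition \ref{723} gives $S(\ket{s_A}\otimes\ket{s_B})\le 3$, which is Eq.(\ref{B}).

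For the second claim, I would exhibit an explicit rank two violator using the example set up in Eqs.(\ref{exa})--(\ref{cv}). The same orthocomplement identity shows that $S(\ket{s})\le 3$ is equivalent to $\bra{s}M\ket{s}\ge 0$, where $M$ is the matrix of Eq.(\ref{cv}). For $a=b=1/\sqrt{2}$ the eigenvalues of $M$ were computed to be $-0.30,\,0.45,\,1.55,\,2.30$. Taking $\ket{s}$ to be a normalised eigenvector of $M$ for the eigenvalue $-0.30$ yields $\bra{s}M\ket{s}=-0.30<0$, so $\ket{s}$ violates Eq.(\ref{B}). The first part of the proposition forces this $\ket{s}$ to have Schmidt rank strictly greater than $1$; since $\dim(H_A)=\dim(H_B)=2$, the rank is exactly $2$, as required.

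The only genuinely technical input is the case analysis inside Lemma \ref{L45}(1) that certifies $\Omega\ge 0$ for all rank one states; everything in the present proposition is a structural repackaging: the orthogonality collapse $\Pi(\mathfrak{H}_{iW})+\Pi(\mathfrak{H}_{jW})=\Pi(\mathfrak{H}_{ijW})$, the lattice identity $h^{\perp\perp}=h$ with Eq.(\ref{A}), the meet-zero Lemma \ref{LE1}, and the CHSH-from-Boole step of Proposition \ref{723}. The expected obstacle, thus, is purely bookkeeping: making the dictionary between the eight one-dimensional subspaces on the left of Eq.(\ref{B}) and the four two-dimensional subspaces of Lemmas \ref{LE1}--\ref{L45} unambiguous, and verifying that $\ket{s}$ in the violation example is in fact rank two rather than accidentally factorisable.
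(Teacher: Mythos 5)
Your proof is correct and takes essentially the same route as the paper: collapse the eight rank-one projectors into the four two-dimensional ones via orthogonality, apply Proposition \ref{723} with the hypotheses supplied by Lemmas \ref{LE1} and \ref{L45}, and exhibit a violation through the negative eigenvalue of the matrix $M$ in Eq.(\ref{cv}). Your closing observation that part one forces the violating eigenvector to have rank exactly two is a small but welcome tightening of a step the paper leaves implicit.
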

\begin{proof}

We apply Eq.(\ref{77}) with
\begin{eqnarray}
&&h_1={\mathfrak H}_{14W};\;\;\;
h_2={\mathfrak H}_{14X};\;\;\;
h_3={\mathfrak H}_{14Y};\;\;\;
h_4={\mathfrak H}_{23Z}.
\end{eqnarray}
Lemma \ref{LE1} and \ref{L45} show that for rank one states, the assumptions for the validity proposition \ref{723} hold.
In particular the subspaces 
\begin{eqnarray}
&&h_1^{\perp}={\mathfrak H}_{14W};\;\;\;
h_2^{\perp}={\mathfrak H}_{14X};\;\;\;
h_3^{\perp}={\mathfrak H}_{14Y};\;\;\;
h_4^{\perp}={\mathfrak H}_{23Z},
\end{eqnarray}
obey Boole's inequality.
Therefore
\begin{eqnarray}\label{A}
p[\Pi({\mathfrak H}_{14W})]+
p[\Pi({\mathfrak H}_{14X})]+
p[\Pi({\mathfrak H}_{14Y})]+
p[\Pi({\mathfrak H}_{23Z})]\le 3.
\end{eqnarray}
The spaces ${\mathfrak H}_{1W}$, ${\mathfrak H}_{4W}$ are orthogonal to each other. Therefore
$\Pi({\mathfrak H}_{14W})=\Pi({\mathfrak H}_{1W})+\Pi({\mathfrak H}_{4W})$ and 
$p[\Pi({\mathfrak H}_{14W})]=p[\Pi({\mathfrak H}_{1W})]+p[\Pi( {\mathfrak H}_{4W})]$.
The same is true for the other three pairs of subspaces, and then Eq.(\ref{A}) gives Eq.(\ref{B}).

We next consider the matrix $M$ of Eq.(\ref{cv}), and rewrite it as
\begin{eqnarray}
M=3\times {\bf 1}-\Pi({\mathfrak H}_{1W})-\Pi({\mathfrak H}_{4W})-\Pi({\mathfrak H}_{1X})-\Pi({\mathfrak H}_{4X})-\Pi({\mathfrak H}_{1Y})-\Pi({\mathfrak H}_{4Y})-\Pi({\mathfrak H}_{2Z})-\Pi({\mathfrak H}_{3Z}).
\end{eqnarray}
We have seen earlier that for the example of Eqs(\ref{exa}), (\ref{exa1}), this matrix has both positive and negative eigenvalues.
This proves that the CHSH inequality in Eq(\ref{B}) can be violated.
\end{proof}
We note here that ref\cite{G} has shown that Bell's inequality is violated by all pure non-product bipartite states.

\section{Reduction of the rank of a state by measurements with orthogonal projectors}
\subsection{The measurement $\Pi_A\otimes \Pi_B$ with outcome `yes'}\label{pro10}

Let $\Pi _A$, $\Pi_B$ be projectors into subspaces of the Hilbert spaces $H_A, H_B$, correspondingly.
The measurement $\Pi_A\otimes \Pi_B$ on a general state $\ket{s}$ gives the outcome `yes' with probability
\begin{eqnarray}
p=\bra{s}(\Pi_A\otimes \Pi_B)\ket{s}.
\end{eqnarray}
This measurement can be performed as a pair of local commuting measurements $\Pi_A\otimes {\bf 1}_B$ and ${\bf 1}_A\otimes \Pi_B$  
on the subsystems $A, B$, with the scheme known as LOCC (local operations and classical communications \cite{A1,A2}).
The observer $A$ performs the measurement $\Pi_A\otimes {\bf 1}_B$ on the state $\ket{s}$, and communicates the outcome ${\rm yes}_A$ or ${\rm no}_A$ into observer $B$.
If the outcome is ${\rm yes}_A$, observer $B$ performs the measurement ${\bf 1}_A\otimes \Pi_B$ on the collapsed state $(\Pi_A\otimes {\bf 1}_B)\ket{s}$, and gets
${\rm yes}_B$ or ${\rm no}_B$. In a large ensemble of states $\ket{s}$, the fraction of ${\rm yes}_A-{\rm yes}_B$ outcomes, gives the probability $p$.
Below for simplicity we refer to the ${\rm yes}_A-{\rm yes}_B$ outcome, as 'yes'.

For the general state $\ket{s}$ in Eq.(\ref{345}), the probability $p$ for  a ${\rm yes}_A-{\rm yes}_B$ outcome, can be written as
\begin{eqnarray}
p=\bra{s}(\Pi_A\otimes \Pi_B)\ket{s}=\sum \mu _{k\ell}^*\mu _{ij}\bra{e_k}\Pi_A\ket{e_i}\bra{f_\ell }\Pi_B\ket{f_j}.
\end{eqnarray}
in which case $\ket{s}$ collapses into the  state
\begin{eqnarray}
\frac{1}{\sqrt p}(\Pi_A\otimes \Pi_B)\ket{s}=\frac{1}{\sqrt p}\sum \mu_{ij}(\Pi_A \ket{e_i})\otimes (\Pi_B\ket{f_j}).
\end{eqnarray}

Let
\begin{eqnarray}
\bra{e_r}\Pi_A \ket{e_i}=\pi ^A_{ri};\;\;\;
\bra{f_q}\Pi_B \ket{f_j}=\pi ^B_{qj}.
\end{eqnarray}
We get
\begin{eqnarray}\label{789}
\frac{1}{\sqrt p}(\Pi_A\otimes \Pi _B)\ket{s}=\frac{1}{\sqrt p}\sum \nu_{ij}\ket{e_i}\otimes \ket{f_j};\;\;\;\nu_{ij}=\sum _r \pi^A _{ir}\mu_{rq}\pi ^B_{jq}
\end{eqnarray}
In the following we ignore the normalization constant $1/{\sqrt p}$, because it does not affect the rank.
It is seen that
\begin{eqnarray}\label{59}
{\cal M}[(\Pi_A\otimes \Pi _B)\ket{s}]=\pi ^A {\cal M}(\ket{s})(\pi ^B) ^T.
\end{eqnarray}
Special cases of Eq.(\ref{59}) are
\begin{eqnarray}
{\cal M}[(\Pi_A\otimes {\bf 1} _B)\ket{s}]=\pi^A {\cal M}(\ket{s});\;\;\;
{\cal M}[({\bf 1}_A\otimes \Pi _B)\ket{s}]={\cal M}(\ket{s})(\pi ^B) ^T.
\end{eqnarray}

The following proposition gives lower and upper bounds for the ${\rm rank}[(\Pi_A\otimes \Pi_B)\ket{s}]$.
\begin{proposition}
\begin{eqnarray}\label{123}
{\rm rank}(\ket{s} )-[d_A-{\rm Tr}(\Pi_A)]-[d_B-{\rm Tr}(\Pi_B)]\le {\rm rank}[(\Pi_A\otimes \Pi_B)\ket{s}]\le \min\left [{\rm rank}(\ket{s} ),{\rm Tr}(\Pi_A), {\rm Tr}(\Pi_B)\right].
\end{eqnarray}
\end{proposition}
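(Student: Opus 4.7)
The plan is to reduce everything to standard matrix rank inequalities applied to the representation in Eq.\,(\ref{59}), namely
\begin{eqnarray}
{\cal M}[(\Pi_A\otimes \Pi_B)\ket{s}]=\pi^A\,{\cal M}(\ket{s})\,(\pi^B)^T.
\end{eqnarray}
The first observation I would record is that $\pi^A$ and $\pi^B$ are the matrices of the orthogonal projectors $\Pi_A,\Pi_B$ in the orthonormal bases $B_A,B_B$, so they are themselves projection matrices. Hence
\begin{eqnarray}
\mathrm{rank}(\pi^A)=\mathrm{Tr}(\pi^A)=\mathrm{Tr}(\Pi_A),\qquad
\mathrm{rank}(\pi^B)=\mathrm{Tr}(\pi^B)=\mathrm{Tr}(\Pi_B),
\end{eqnarray}
while $\mathrm{rank}({\cal M}(\ket{s}))=\mathrm{rank}(\ket{s})$ by the remark following Eq.\,(\ref{MM2}), and $\mathrm{rank}\,[(\Pi_A\otimes\Pi_B)\ket{s}]=\mathrm{rank}\,{\cal M}[(\Pi_A\otimes\Pi_B)\ket{s}]$ by the same remark.

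For the upper bound I would invoke the submultiplicativity of rank: $\mathrm{rank}(XYZ)\le\min\{\mathrm{rank}(X),\mathrm{rank}(Y),\mathrm{rank}(Z)\}$. Applied to $\pi^A{\cal M}(\ket{s})(\pi^B)^T$ this gives directly
\begin{eqnarray}
\mathrm{rank}[(\Pi_A\otimes\Pi_B)\ket{s}]\le\min\{\mathrm{Tr}(\Pi_A),\mathrm{rank}(\ket{s}),\mathrm{Tr}(\Pi_B)\},
\end{eqnarray}
which is exactly the right-hand side of Eq.\,(\ref{123}).

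For the lower bound I would use Sylvester's rank inequality twice: if $X$ is $m\times n$ and $Y$ is $n\times k$, then $\mathrm{rank}(XY)\ge\mathrm{rank}(X)+\mathrm{rank}(Y)-n$. First, with $X=\pi^A$ (size $d_A\times d_A$) and $Y={\cal M}(\ket{s})$,
\begin{eqnarray}
\mathrm{rank}[\pi^A{\cal M}(\ket{s})]\ge\mathrm{Tr}(\Pi_A)+\mathrm{rank}(\ket{s})-d_A.
\end{eqnarray}
Then, with this product playing the role of $X$ and $(\pi^B)^T$ (size $d_B\times d_B$) playing the role of $Y$,
\begin{eqnarray}
\mathrm{rank}[\pi^A{\cal M}(\ket{s})(\pi^B)^T]\ge\mathrm{rank}[\pi^A{\cal M}(\ket{s})]+\mathrm{Tr}(\Pi_B)-d_B,
\end{eqnarray}
and combining the two yields $\mathrm{rank}(\ket{s})-[d_A-\mathrm{Tr}(\Pi_A)]-[d_B-\mathrm{Tr}(\Pi_B)]$ as required (of course the bound is meaningful only when the right-hand side is positive; otherwise the trivial bound $\mathrm{rank}\ge 0$ takes over). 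There is no real obstacle here: the only thing to be careful about is that the identification $\mathrm{rank}(\pi^A)=\mathrm{Tr}(\Pi_A)$ (and similarly for $B$) is used, which depends on $\Pi_A$ being an orthogonal projector, and that Eq.\,(\ref{59}) is applied before normalization so that the scalar factor $1/\sqrt p$ does not affect the rank.
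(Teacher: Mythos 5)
Your proof is correct and follows essentially the same route as the paper: both reduce the statement to the factorization ${\cal M}[(\Pi_A\otimes\Pi_B)\ket{s}]=\pi^A{\cal M}(\ket{s})(\pi^B)^T$ and then apply the Sylvester rank inequality (iterated to three factors) for the lower bound and rank submultiplicativity for the upper bound, using ${\rm rank}(\pi^A)={\rm Tr}(\Pi_A)$ and ${\rm rank}(\pi^B)={\rm Tr}(\Pi_B)$. Your explicit remarks about the projector property guaranteeing the rank--trace identification and about the harmless normalization factor are sound but do not change the argument.
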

\begin{proof}
The proof is based on the Sylvester inequality\cite{M}.
If $A$ is a $k\times \ell$ matrix, and $B$ is an $\ell\times m$ matrix then
\begin{eqnarray}
{\rm rank}(A)+{\rm rank}(B)-\ell \le {\rm rank}(AB)\le \min\left ({\rm rank}(A), {\rm rank}(B)\right ).
\end{eqnarray}
From this follows that if $A$ is a $k\times \ell$ matrix, $B$ is an $\ell\times m$ matrix, and $C$ is an $m\times n$ matrix, then
\begin{eqnarray}
{\rm rank}(A)+{\rm rank}(B)+{\rm rank}(C)-\ell-m \le {\rm rank}(ABC)\le \min\left ({\rm rank}(A), {\rm rank}(B), {\rm rank}(C)\right ).
\end{eqnarray}
We use this with $A=\pi ^A$, $B={\cal M}(\ket{s})$ and $C=(\pi ^B) ^T$.
Using the fact that ${\rm rank} (\pi ^A)={\rm Tr}(\Pi_A)$ and ${\rm rank} [(\pi^B) ^T]={\rm Tr}(\Pi_B)$, we get Eq.(\ref{123}).
\end{proof}

In the special cases of measurements $\Pi_A\otimes {\bf 1}_B$ and ${\bf 1}_A\otimes \Pi_B$, Eq.(\ref{123}) reduces to
\begin{eqnarray}\label{38}
&&{\rm rank}(\ket{s} )-[d_A-{\rm Tr}(\Pi_A)]\le {\rm rank}[(\Pi_A\otimes {\bf 1}_B)\ket{s}]\le \min\left [{\rm rank}(\ket{s} ),{\rm Tr}(\Pi_A)\right]\nonumber\\
&&{\rm rank}(\ket{s} )-[d_B-{\rm Tr}(\Pi_B)]\le {\rm rank}[({\bf 1}_A\otimes \Pi_B)\ket{s}]\le \min\left [{\rm rank}(\ket{s} ),{\rm Tr}(\Pi_B)\right].
\end{eqnarray}

We consider the measurements  $\Pi_A\otimes {\bf 1}_B$, ${\bf 1}_A\otimes \Pi _B$ and $\Pi_A\otimes \Pi_B$
on the state $\ket {s}$ and assume that the outcome is 'yes', in which case the state collapses into 
$(\Pi_A\otimes {\bf 1}_B)\ket{s}$, $({\bf 1}_A\otimes \Pi _B)\ket{s}$, $(\Pi_A\otimes \Pi_B)\ket{s}$, correspondingly. Then
\begin{eqnarray}
&&{\cal R}_A(\ket{s})={\rm rank}(\ket{s})-{\rm rank}[(\Pi_A\otimes {\bf 1}_B)\ket{s}],\nonumber\\
&&{\cal R}_B(\ket{s})={\rm rank}(\ket{s})-{\rm rank}[({\bf 1}_A\otimes \Pi _B)\ket{s}],\nonumber\\
&&{\cal R}_{AB}(\ket{s})={\rm rank}(\ket{s})-{\rm rank}[(\Pi_A\otimes \Pi_B)\ket{s}],
\end{eqnarray}
is the reduction in the rank of $\ket{s}$, correspondingly.

The rank reduction is 
\begin{eqnarray}
&&{\cal R}_A(\ket{s})=\dim h_A(\ket{s})-\dim h_A[(\Pi_A\otimes {\bf 1}_B)\ket{s}]=\dim h_B(\ket{s})-\dim h_B[(\Pi_A\otimes {\bf 1}_B)\ket{s}],
\end{eqnarray}
and similarly for ${\cal R}_B(\ket{s})$. From Eqs.(\ref{123}), (\ref{38}) follows immediately that
\begin{eqnarray}\label{5t}
&&{\cal R}_A(\ket{s})\le d_A-{\rm Tr}(\Pi_A)\nonumber\\
&&{\cal R}_B(\ket{s})\le d_B-{\rm Tr}(\Pi_B)\nonumber\\
&&{\cal R}_{AB}(\ket{s})\le [d_A-{\rm Tr}(\Pi_A)]+[d_B-{\rm Tr}(\Pi_B)].
\end{eqnarray}

The following proposition gives some other inequalities.
\begin{proposition}
\begin{eqnarray}\label{149}
{\cal R}_A(\ket{s})+{\cal R}_B(\ket{s})\ge {\cal R}_{AB}(\ket{s})\ge \max[{\cal R}_A(\ket{s}), {\cal R}_B(\ket{s})].
\end{eqnarray}
\end{proposition}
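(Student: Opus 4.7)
The plan is to reduce everything to standard rank inequalities for a triple matrix product, using the representation in Eq.~(\ref{59}):
\begin{eqnarray}
{\cal M}[(\Pi_A\otimes {\bf 1}_B)\ket{s}]=\pi^A{\cal M}(\ket{s}),\quad
{\cal M}[({\bf 1}_A\otimes \Pi_B)\ket{s}]={\cal M}(\ket{s})(\pi^B)^T,\quad
{\cal M}[(\Pi_A\otimes \Pi_B)\ket{s}]=\pi^A{\cal M}(\ket{s})(\pi^B)^T, \nonumber
\end{eqnarray}
together with the fact that the rank of a state equals the rank of the matrix representing it.

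For the right-hand inequality ${\cal R}_{AB}(\ket{s})\ge\max[{\cal R}_A(\ket{s}),{\cal R}_B(\ket{s})]$, I would simply invoke the elementary bound ${\rm rank}(XY)\le\min[{\rm rank}(X),{\rm rank}(Y)]$ that was already used in the preceding proposition. Viewing $\pi^A{\cal M}(\ket{s})(\pi^B)^T$ as the product of $\pi^A{\cal M}(\ket{s})$ and $(\pi^B)^T$ yields ${\rm rank}[(\Pi_A\otimes\Pi_B)\ket{s}]\le {\rm rank}[(\Pi_A\otimes{\bf 1}_B)\ket{s}]$, and viewing it as the product of $\pi^A$ and ${\cal M}(\ket{s})(\pi^B)^T$ yields the symmetric bound against ${\rm rank}[({\bf 1}_A\otimes\Pi_B)\ket{s}]$. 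Subtracting both from ${\rm rank}(\ket{s})$ converts these into ${\cal R}_{AB}\ge {\cal R}_A$ and ${\cal R}_{AB}\ge {\cal R}_B$.

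For the left-hand inequality ${\cal R}_A(\ket{s})+{\cal R}_B(\ket{s})\ge{\cal R}_{AB}(\ket{s})$, the natural tool is the Frobenius rank inequality: for any matrices $A,B,C$ with compatible dimensions,
\begin{eqnarray}
{\rm rank}(AB)+{\rm rank}(BC)\le {\rm rank}(B)+{\rm rank}(ABC). \nonumber
\end{eqnarray}
Applying this with $A=\pi^A$, $B={\cal M}(\ket{s})$, $C=(\pi^B)^T$ gives
\begin{eqnarray}
{\rm rank}[(\Pi_A\otimes{\bf 1}_B)\ket{s}]+{\rm rank}[({\bf 1}_A\otimes\Pi_B)\ket{s}]\le {\rm rank}(\ket{s})+{\rm rank}[(\Pi_A\otimes\Pi_B)\ket{s}]. \nonumber
\end{eqnarray}
Subtracting both sides from $2\,{\rm rank}(\ket{s})$ and using the definitions of ${\cal R}_A,{\cal R}_B,{\cal R}_{AB}$ rearranges this directly into the desired inequality.

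I do not expect any real obstacle: the whole proof is essentially a dictionary translation between rank reductions of post-measurement states and ranks of the triple matrix product $\pi^A{\cal M}(\ket{s})(\pi^B)^T$. The only point worth flagging is that the statement implicitly assumes the three ``yes'' outcomes have nonzero probability so that all three collapsed states (and their representing matrices) are nonzero; this is already assumed throughout Section~\ref{pro10}, and in any case the Sylvester and Frobenius inequalities remain valid in the degenerate case. The conceptual content is just the recognition that Frobenius is the right inequality to deploy.
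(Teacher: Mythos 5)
Your proof is correct and follows essentially the same route as the paper: the left-hand inequality via the Frobenius rank inequality applied to $\pi^A{\cal M}(\ket{s})(\pi^B)^T$, and the right-hand inequality via the elementary bound ${\rm rank}(XY)\le\min[{\rm rank}(X),{\rm rank}(Y)]$ (which the paper phrases as ``a measurement reduces the rank of a state''). No gaps.
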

\begin{proof}
The proof is based on the Frobenius inequality\cite{M}.
If $A$ is a $k\times \ell$ matrix, $B$ is an $\ell\times m$ matrix, and $C$ is an $m\times n$ matrix, then
\begin{eqnarray}
{\rm rank}(AB)+{\rm rank}(BC) \le {\rm rank}(ABC)+{\rm rank}(B).
\end{eqnarray}
We use this with $A=\pi ^A$, $B={\cal M}(\ket{s})$ and $C=(\pi^B)^T$ and get 
\begin{eqnarray}\label{309}
{\rm rank}[(\Pi_A\otimes {\bf 1}_B)\ket{s}]+{\rm rank}[({\bf 1}_A\otimes \Pi_B)\ket{s}]\le 
{\rm rank}[(\Pi_A\otimes \Pi_B)\ket{s}]+{\rm rank}(\ket{s}).
\end{eqnarray}
From this follows the left hand side of the inequality in Eq.(\ref{149}).

A measurement reduces the rank of a state and therefore the measurement ${\bf 1}_A\otimes \Pi _B$ on the state $(\Pi_A\otimes {\bf 1}_B)\ket{s}$ gives
\begin{eqnarray}
{\rm rank}\{({\bf 1}_A\otimes \Pi _B)[(\Pi_A\otimes {\bf 1}_B)\ket{s}]\}\le 
{\rm rank}[(\Pi_A\otimes {\bf 1}_B)\ket{s}].
\end{eqnarray}
We rewrite this as
\begin{eqnarray}
{\rm rank}[(\Pi_A\otimes \Pi_B)\ket{s}]\le 
{\rm rank}[(\Pi_A\otimes {\bf 1}_B)\ket{s}].
\end{eqnarray}
and from this follows that ${\cal R}_{AB}(\ket{s})\ge {\cal R}_A(\ket{s})$.
In a similar way we prove that ${\cal R}_{AB}(\ket{s})\ge {\cal R}_B(\ket{s})$,
and then follows the right hand side of the inequality in Eq.(\ref{149}).
\end{proof}

\subsection{Average reduction of the rank of a state by orthogonal measurements $\sum m_{ab}\Pi_{Aa}\otimes\Pi_{Bb}$}
We consider orthogonal projectors $\Pi_{Aa}$ on subspaces of $H_A$, and orthogonal projectors $\Pi_{Bb}$ on subspaces of $H_B$
such that
\begin{eqnarray}\label{bg}
&&\sum _a\Pi_{Aa}={\bf 1}_A;\;\;\;\Pi_{Aa}\Pi_{Ac}=\delta _{ac}\Pi_{Aa};\;\;\;a=1,...,{\mathfrak a}\nonumber\\
&&\sum _b\Pi_{Bb}={\bf 1}_B;\;\;\;\Pi_{Bb}\Pi_{Bd}=\delta _{bd}\Pi_{Bb};\;\;\;b=1,...,{\mathfrak b}\nonumber\\
&&\sum _{a,b}\Pi_{Aa}\otimes \Pi_{Bb}={\bf 1}.
\end{eqnarray}
We consider the following measurement on the bipartite system
\begin{eqnarray}
{\mathfrak M}=\sum _{a,b}m_{ab}\Pi_{Aa}\otimes\Pi_{Bb}.
\end{eqnarray}
We note that there are other non-local measurements which are not considered here.

We use the results of the previous subsection to find the average of the rank reduction of a given state, by this measurement.
The probability to get the outcome $m_{ab}$ is
\begin{eqnarray}
p_{ab}=\bra{s}(\Pi_{Aa}\otimes \Pi_{Bb})\ket{s}.
\end{eqnarray}

Eq.(\ref{5t}) shows that an upper limit for the rank reduction in this case, is
\begin{eqnarray}
{\cal R}_{ab}(\ket{s})={\rm rank}(\ket{s})-{\rm rank}[(\Pi_{Aa}\otimes \Pi_{Bb})\ket{s}]
\le (d_A+d_B)-[{\rm Tr}(\Pi_{Aa})+{\rm Tr}(\Pi_{Bb})].
\end{eqnarray}
Therefore the average rank reduction of a given state $\ket{s}$ obeys the inequality
\begin{eqnarray}
{\cal R}_{\rm ave}(\ket{s})=\sum _{a,b}p_{ab}{\cal R}_{ab}(\ket{s}) &\le &(d_A+d_B)-\sum _{a,b}p_{ab}
[{\rm Tr}(\Pi_{Aa})+{\rm Tr}(\Pi_{Bb})].
\end{eqnarray}
Let
\begin{eqnarray}
{\mathfrak p}_b=\sum _a p_{ab}=\bra{s}({\bf 1}_A\otimes \Pi_{Bb})\ket{s};\;\;\; {\mathfrak p}_a=\sum _b p_{ab}=\bra{s}(\Pi_{Aa}\otimes {\bf 1}_B)\ket{s}.
\end{eqnarray}
Then
\begin{eqnarray}\label{pp}
{\cal R}_{\rm ave}(\ket{s})=\sum _{a,b}p_{ab}{\cal R}_{ab}(\ket{s}) &\le &(d_A+d_B)-[{\mathfrak p}_a{\rm Tr}(\Pi_{Aa})+{\mathfrak p}_b{\rm Tr}(\Pi_{Bb})].
\end{eqnarray}

\subsection{Example}
Let $H_A, H_B$ be $3$-dimensional spaces and $\ket{i}\otimes \ket{j}$ with $i,j=0,1,2$ be an orthonormal basis in 
the $9$-dimensional space $H_A\otimes H_B$.
We consider the four projectors
\begin{eqnarray}
&&\Pi_{A1}=\ket{0}\bra{0}+\ket{1}\bra{1};\;\;\;\Pi_{A2}=\ket{2}\bra{2};\;\;\;\Pi_{A1}+\Pi_{A2}={\bf 1}_A\nonumber\\
&&\Pi_{B1}=\ket{0}\bra{0};\;\;\;\Pi_{B2}=\ket{1}\bra{1}+\ket{2}\bra{2};\;\;\;\Pi_{B1}+\Pi_{B2}={\bf 1}_B,
\end{eqnarray}
and the measurement
\begin{eqnarray}
{\mathfrak M}=m_{11}\Pi_{A1}\otimes \Pi_{B1}+m_{12}\Pi_{A1}\otimes \Pi_{B2}+m_{21}\Pi_{A2}\otimes \Pi_{B1}+m_{22}\Pi_{A2}\otimes \Pi_{B2}.
\end{eqnarray}
Let $\ket{s}$ be the state
\begin{eqnarray}
\ket{s}=\frac{1}{\sqrt {15}}[\ket{0,0}+2\ket{0,1}+\ket{1,1}+3\ket{2,2}];\;\;\;{\rm rank}(\ket{s})=3.
\end{eqnarray}
Below we give the probabilities $p_{ab}$ that the outcome of the measurement ${\mathfrak M}$ on the state $\ket{s}$ is $m_{ab}$, the corresponding state  $\ket{s_{ab}}$
into which $\ket{s}$ collapses, the rank of this state, and the corresponding rank reduction ${\cal R}_{ab}(\ket{s})$:
\begin{eqnarray}
&&p_{11}=\frac{1}{15};\;\;\;\ket{s_{11}}=\ket{0,0};\;\;\;\;{\rm rank}(\ket{s_{11}})=1;\;\;\;{\cal R}_{11}(\ket{s})=2;\nonumber\\
&&p_{12}=\frac{1}{3};\;\;\;\ket{s_{12}}=\frac{1}{\sqrt 5}(2\ket{0,1}+\ket{1,1});\;\;\;{\rm rank}(\ket{s_{12}})=1;\;\;\;\;{\cal R}_{12}(\ket{s})=2;\nonumber\\
&&p_{21}=0;\nonumber\\
&&p_{22}=\frac{3}{5};\;\;\;\ket{s_{22}}=\ket{2,2};\;\;\;\;{\rm rank}(\ket{s_{22}})=1;\;\;\;{\cal R}_{22}(\ket{s})=2.
\end{eqnarray}
From this follows that the average rank reduction for the state $\ket{s}$ is
\begin{eqnarray}
{\cal R}_{\rm ave}(\ket{s})=p_{11}{\cal R}_{11}(\ket{s})+p_{12}{\cal R}_{12}(\ket{s})+p_{21}{\cal R}_{21}(\ket{s})+p_{22}{\cal R}_{22}(\ket{s})=2.
\end{eqnarray}
Since ${\rm Tr}(\Pi_{A1})={\rm Tr}(\Pi_{B2})=2$ and ${\rm Tr}(\Pi_{B1})={\rm Tr}(\Pi_{A2})=1$, we find that the upper limit for ${\cal R}_{\rm ave}(\ket{s})$
given in Eq.(\ref{pp}), is
\begin{eqnarray}
{\cal R}_{\rm ave}(\ket{s})&\le& 3+3-p_{11}[{\rm Tr}(\Pi_{A1})+{\rm Tr}(\Pi_{B1})]-p_{12}[{\rm Tr}(\Pi_{A1})+{\rm Tr}(\Pi_{B2})]\nonumber\\
&-&p_{21}[{\rm Tr}(\Pi_{A2})+{\rm Tr}(\Pi_{B1})]-p_{22}[{\rm Tr}(\Pi_{A2})+{\rm Tr}(\Pi_{B2})]=\frac{8}{3}.
\end{eqnarray}

\section{Reduction of the rank of a state by measurements with coherent projectors}

\subsection{Coherent projectors in finite-dimensional Hilbert spaces}\label{vv}
We consider a quantum system with variables in ${\mathbb Z}(d)$ (the integers modulo $d$), described by a $d$-dimensional Hilbert space $H$.
Let $\ket{X;n}$ be an orthonormal basis which we call position states, and $\ket{P;n}$ another orthonormal basis which we call momentum states
($X$ and $P$ in the notation simply indicate position and momentum states).
They are related through a finite Fourier transform:\cite{Fin1,Fin2}:
\begin{eqnarray}
&&F=d^{-1/2}\sum _m\omega (mn)\ket{X;n}\bra{X;m};\;\;\;\;\omega(\alpha )=\exp \left(\frac{i2\pi \alpha}{d}\right )
\nonumber\\
&&\ket{P;n}=F\ket{X;n};\;\;\;\;m,n, \alpha\in {\mathbb Z}(d).
\end{eqnarray}
Displacement operators in the ${\mathbb Z}(d)\times {\mathbb Z}(d)$ phase space, are given by
\begin{eqnarray}\label{dis}
&&D(\alpha , \beta)=Z^{\alpha}X^{\beta}\omega (-2^{-1}\alpha \beta);\;\;\;\alpha, \beta \in {\mathbb Z}(d)\nonumber\\
&&Z=\sum _m\omega (m)\ket{X;m}\bra{X;m};\;\;\;\;X=\sum _m \ket{X;m+1}\bra{X;m}
\end{eqnarray}
We consider the case where $d$ is an odd integer (in this case the $2^{-1}$ exists in ${\mathbb Z}(d)$). 

Acting with $D(\alpha , \beta)$ on a `generic' projector which we denote as $\Pi(0,0)$,
we get the $d^2$ coherent projectors\cite{V}:
\begin{eqnarray}\label{5}
\Pi(\alpha, \beta)=D(\alpha , \beta)\Pi(0,0)[D(\alpha , \beta)]^{\dagger};\;\;\frac{1}{d}\sum _{\alpha,\beta}\Pi({\alpha,\beta})={\bf 1};\;\;\;
{\rm Tr}[\Pi(\alpha, \beta)]={\rm Tr}[\Pi(0,0)].
\end{eqnarray}
We use a `generic' projector, so that the $d^2$ coherent projectors are different from each other.
For example, $\Pi(0,0)$ should not be one of the $\ket{X;n}\bra{X;n}$ or $\ket{P;n}\bra{P;n}$.
We note that the trace of the coherent projectors is in general  greater than one.

\subsection{Average reduction of the rank of a state by measurements with coherent projectors}
Earlier we considered orthogonal measurements.
Here we consider more general measurements with positive operator valued measures (POVM).
An important example is to use the coherent projectors of Eq.(\ref{5}) which in bipartite systems are 
the $d_A^2d_B^2$ projectors $\Pi_A(\alpha, \beta)\otimes \Pi_B(\gamma, \delta)$, and obey the resolution of the identity
\begin{eqnarray}
\frac{1}{d_Ad_B}\sum _{\alpha,\beta, \gamma , \delta}\Pi_A(\alpha, \beta)\otimes \Pi_B(\gamma, \delta)={\bf 1}.
\end{eqnarray}
These projectors do not commute and measurements will be performed on different ensembles of the same state $\ket{s}$.
The state $\ket{s}$ will become
\begin{eqnarray}
\frac{1}{\sqrt{ \bra{s}\Pi_A(\alpha, \beta)\otimes \Pi_B(\gamma, \delta)]\ket{s}}}[\Pi_A(\alpha, \beta)\otimes \Pi_B(\gamma, \delta)]\ket{s}
\end{eqnarray}
with probability
\begin{eqnarray}
p(\alpha,\beta, \gamma , \delta)=\frac{1}{d_Ad_B}\bra{s}\Pi_A(\alpha, \beta)\otimes \Pi_B(\gamma, \delta)]\ket{s}
\end{eqnarray}

All $\Pi_A(\alpha, \beta)$  have the same trace which we denote as  ${\rm Tr}[\Pi_A(\alpha, \beta)]={\mathfrak t}_A$
and all $\Pi_B(\gamma, \delta)]$ have the same trace which we denote as ${\rm Tr}[\Pi_B(\gamma, \delta)]={\mathfrak t}_B$. 
Eq.(\ref{5t}) shows that an upper limit for the rank reduction in the case that the outcome from the measurement $\Pi_A(\alpha, \beta)\otimes \Pi_B(\gamma, \delta)$ is yes,
is
\begin{eqnarray}
{\cal R}(\alpha,\beta, \gamma , \delta; \ket{s})={\rm rank}(\ket{s})-{\rm rank}[\Pi_A(\alpha, \beta)\otimes \Pi_B(\gamma, \delta)]\ket{s}]
\le (d_A-{\mathfrak t}_A)+(d_B-{\mathfrak t}_B).
\end{eqnarray}
Therefore the average rank reduction of a given state $\ket{s}$ obeys the inequality
\begin{eqnarray}\label{pp}
{\cal R}_{\rm ave}=\sum p(\alpha,\beta, \gamma , \delta) {\cal R}(\alpha,\beta, \gamma , \delta; \ket{s})&\le &(d_A-{\mathfrak t}_A)+(d_B-{\mathfrak t}_B).
\end{eqnarray}
This result does not depend on the state $\ket{s}$.
It is seen that in order to have small rank reduction we should have projectors with trace close to the dimension of the space.

\section{Discussion}

We considered the Boole, Chung-Erd\"os, and Frechet inequalities for classical (Kolmogorov) probabilities,
and we added quantum corrections so that they hold for  all quantum states.
Under certain sufficient conditions, these quantum corrections are zero, in which case the quantum inequalities reduce to their classical counterparts.

These probabilistic inequalities have been studied in more detail, in the context of bipartite quantum systems.
We gave sufficient conditions for the quantum corrections to be zero.
We have also shown that in general
classical Boole inequalities always hold for rank one (factorizable) states, and they are violated by some rank two (entangled) states.

CHSH inequalities are based on the assumption that classical Boolean inequalities hold for quantum probabilities.
In the light of our results for Boole inequalities in bipartite systems, it not surprising that CHSH inequalities always hold for rank one (factorizable) states, and they are violated by some rank two (entangled) states.

We have also studied the reduction of the rank of a state by quantum measurements with both orthogonal and coherent projectors.
We gave upper bounds for the average rank reduction caused by these measurements.

The work provides a deeper insight into the nature of quantum probabilities, in particular in the context of bipartite entangled systems.

\end{document}